\newtheorem{definition}{Definition}
\newtheorem{lemma}{Lemma}
\newtheorem{proof}{Proof}
\title{A Mathematical Model for the Dynamics and Synchronization of Cows}
\author{Jie Sun\thanks{Department of Mathematics and Computer Science, Clarkson University,
Potsdam, NY 13699-5815, USA, 
({\tt sunj@clarkson.edu}). 
Current address: Department of Physics \& Astronomy, Northwestern University, Evanston, IL 60208-3112, USA,
({\tt sunj@northwestern.edu}).
}
\and Erik M. Bollt\thanks{Department of Mathematics and Computer Science, Clarkson University,
Potsdam, NY 13699-5815, USA, 
({\tt bolltem@clarkson.edu}).
}
\and Mason A. Porter\thanks{Oxford Centre for Industrial and Applied Mathematics, Mathematical Institute and CABDyN Complexity Centre, University of Oxford, Oxford OX1 3LB, UK, 
({\tt porterm@maths.ox.ac.uk}).
}
\and Marian S. Dawkins\thanks{Department of Zoology, University of Oxford, OX1 3PS, UK, 
({\tt marian.dawkins@zoo.ox.ac.uk}).
}
}
\begin{document}

\maketitle


\begin{abstract}

We formulate a mathematical model for daily activities of a cow (eating, lying down, and standing) in terms of a piecewise affine dynamical system.  We analyze the properties of this bovine dynamical system representing the single animal
and develop an exact integrative form as a discrete-time mapping.  We then couple multiple cow ``oscillators" together to study synchrony and cooperation in cattle herds.  We comment on the relevant biology and discuss extensions of our model.  With this abstract approach, we not only investigate equations with interesting dynamics but also develop interesting
biological predictions.  In particular, our model illustrates that it is possible for cows to synchronize \emph{less} when the coupling is increased.

\end{abstract}



\section*{Keywords}

piecewise smooth dynamical systems, animal behavior, synchronization, cows



\section*{AMS Classification}

37N25, 92D50, 92B25



\pagestyle{myheadings}
\thispagestyle{plain}
\markboth{J. SUN, E.~M. BOLLT, M.~A. PORTER, AND M.~S. DAWKINS}{SYNCHRONIZATION OF COWS}





\section{Introduction} \label{one}

The study of collective behavior---whether of animals, mechanical systems, or simply abstract oscillators---has fascinated a large number of researchers from observational zoologists to pure mathematicians \cite{sync,pikovsky}.  In animals, for example, the study of phenomena such as flocking and herding now involves close collaboration between biologists, mathematicians, physicists, computer scientists, and others \cite{couzinreview,vicsek99,paley07,conradt09}.  This has led to a large number of fundamental insights---for example, bacterial colonies exhibit cooperative growth patterns \cite{ebj94}, schools of fish can make collective decisions \cite{sumpter08}, army ants coordinate in the construction of bridges \cite{couzininpress}, intrinsic stochasticity can facilitate coherence in insect swarms \cite{yates09}, human beings coordinate in consensus decision making \cite{dyer09}, and more.  It has also led to interesting applications, including stabilization strategies for collective motion \cite{scardovi08} and multi-vehicle flocking \cite{chuang07}.

Grazing animals such as antelope, cattle, and sheep derive protection from predators by living in herds \cite{mendl01,estevez07}.  By synchronizing their behavior (i.e., by tending to eat and lie down at the same time), it is easier for the animals to remain together as a herd \cite{rook91,conradt00}.  When out at pasture, cattle are strongly synchronized in their behavior \cite{benham82}, but when housed indoors during the winter, increased competition for limited resources can lead to increased aggression \cite{arave76,nielsen97,mendl01}, interrupted feeding or lying \cite{boe06}, and a breakdown of synchrony \cite{mogensen97}.  There is a growing body of evidence that such disruptions to synchrony (in particular, disruptions to lying down) can have significant effects on cattle production (i.e., growth rate) and cattle welfare \cite{jensen95,hindhede96,mogensen97,fisher02,munk05,gygax07,faerevik08}.  Indeed, synchrony has been proposed as a useful measure of positive welfare in cattle \cite{faerevik08,nap09}, and the European Union regulations stipulate that cattle housed in groups should be given sufficient space so that they can all lie down simultaneously (Council Directive 97/2/EC).  In the winter, cattle have to be housed indoors; space for both lying and feeding is thus limited, and welfare problems can potentially arise because such circumstances interfere with the inherent individual oscillations of cows. 

Although cattle synchronize their behavior if space and resources allow, the mechanism by which they do this is not fully understood \cite{conradt00,nap09}.  In this paper, we examine interacting cattle using a mathematical setting to try to gain an understanding of possible mechanisms.  Viable approaches to studying interacting cows include agent-based models as well as further abstraction via the development and analysis of appropriate dynamical systems to model the cattle behavior.  In a recent dissertation \cite{franz}, B. Franz modified the animal behavior model of Ref.~\cite{couzin} to develop an agent-based model of beef cattle and conduct a preliminary investigation of its synchronization properties.  Given the extreme difficulty of actually understanding the mechanisms that produce the observed dynamics in such models, we have decided instead to take a more abstract approach using dynamical systems.  

Cattle are ruminants, so it is biologically plausible to view them as oscillators.  They ingest plant food, swallow it and then regurgitate it at some later stage, and then chew it again.  During the first stage (standing/feeding), they stand up to graze, but they strongly prefer to lie down and `ruminate' or chew the cud for the second stage (lying/ruminating).  They thus oscillate between two stages.  Both stages are necessary for complete digestion, although the duration of each stage depends on factors such as the nutrient content of the food and the metabolic state of the animal \cite{odrisc09}.\footnote{This oscillating approach to eating is one of the things that made cattle suitable for domestication, as they can eat during the day and then be locked up safely at night to ruminate).}  We thus suppose that each cow is an oscillator, and we choose each oscillator to be a piecewise affine dynamical system in order to incorporate the requisite state-switching behavior in the simplest possible fashion.  Even with this simple model, each individual cow exhibits very interesting dynamics, which is unsurprising given the known complexities of modeling piecewise smooth dynamical systems \cite{Bernardo_2007,scholarpiecewise,Kowalczyk_2005}.  Piecewise smooth systems have been employed successfully in numerous applications---especially in engineering but occasionally also in other subjects, including biology \cite{Glass_1975,Gouze_2002}.  To our knowledge, however, this paper presents the first application of piecewise smooth dynamical systems to animal behavior.  


Our contributions in this paper include the development of a piecewise affine dynamical system model of a cow's eating, lying down, and standing cycles; an in-depth analysis of the mathematical properties of this model; investigation of synchronization in models (which we call \emph{herd models}) produced by coupling multiple copies of the single cow model in a biologically-motivated manner; and a discussion of the biological consequences of our results.  Although our approach is abstract, the present paper is not merely an investigation of equations with interesting dynamics, as we have also developed interesting 
biological predictions.

The rest of this paper is organized as follows.  In Section \ref{two}, we discuss the dynamical system that we use to describe the behavior of a single cow.  We present, in turn, the equations of motion, conditions that describe switching between different states (eating, lying down, and standing), and a discrete representation using a Poincar\'e section.  In Section \ref{three}, we analyze this \emph{single cow model} by studying its equilibrium point, periodic orbits, and bifurcations.  We examine interacting cows in Section \ref{four}.  We present the coupling scheme that we use to construct our \emph{herd equations}, introduce the measure of synchrony that we employ, and examine herd synchrony numerically first for a pair of cows and then for larger networks of cows.  In Section \ref{five}, we comment on our results and briefly discuss variant herd models that can be constructed with different types of coupling.  We then conclude in Section \ref{six} and provide details of our Poincar\'e section and map constructions and analysis in Appendix \ref{app1}.


\section{Single Cow Model}\label{two}

\subsection{Equations of Motion}

We construct a caricature of each cow by separately considering the observable \emph{state} of the cow (eating, lying down, or standing) and its unobservable level of hunger or desire to lie down,
which can each vary between $0$ and $1$.  We also need a mechanism to switch between different states when the level of hunger or desire to lie down exceeds some threshold.  We therefore model each individual cow as a piecewise smooth dynamical system \cite{Bernardo_2007}.

We model the biological status of a single cow by
\begin{equation}
	w=(x,y;\theta)\in [0,1] \times [0,1] \times \Theta\,.
\end{equation}
  The real variables $x$ and $y$ represent, respectively, the extent of desire to eat and lie down of the cow, and
  \begin{equation}
  	\theta\in\Theta=\{\mathcal{E,R,S}\}
  \end{equation}
is a discrete variable that represents the current {state} of the cow (see the equations below for descriptions of the states).  Throughout this paper, we will refer to $\theta$ as a \textit{symbolic variable} or a \textit{state variable}. One can think of the symbolic variable $\theta$ as a switch that triggers different time evolution rules for the other two variables $x$ and $y$.  

We model the dynamics of a single cow in different states using
\begin{eqnarray}\label{eq:singlecow}
	&\mbox{($\cal{E}$)~\textit{Eating state: }}&
	\begin{cases}\label{eq:singlecowE}
		\dot{x} = -\alpha_{2}x\,,\\
		\dot{y} = \beta_{1}y\,.
	\end{cases} \\
	&\mbox{($\cal{R}$)~\textit{Resting state: }}&
	\begin{cases}\label{eq:singlecowR}
		\dot{x} = \alpha_{1}x\,,\\
		\dot{y} = -\beta_{2}y\,.
	\end{cases} \\
	&\mbox{($\cal{S}$)~\textit{Standing state: }}&
	\begin{cases}\label{eq:singlecowS}
		\dot{x} = \alpha_{1}x\,,\\
		\dot{y} = \beta_{1}y\,,
	\end{cases} 
\end{eqnarray}
where the calligraphic letters inside parentheses indicate the corresponding values of $\theta$.  For biological reasons, the parameters $\alpha_1$, $\alpha_2$, $\beta_1$, and $\beta_2$ must all be {positive} real numbers.  They can be interpreted as follows: 
\begin{equation}\nonumber
	\begin{cases}
	\alpha_1: \mbox{rate of increase of hunger}\,,\\
	\alpha_2: \mbox{decay rate of hunger}\,,\\
	\beta_1: \mbox{rate of increase of desire to lie down}\,,\\
	\beta_2: \mbox{decay rate of desire to lie down}\,.
	\end{cases}
\end{equation}

The monotocity in each state (growth versus decay) is the salient feature of the dynamics, and we choose a linear dependence in each case to facilitate analytical treatment.  The piecewise smooth dynamical system describing an individual cow is thus a \emph{piecewise affine dynamical system} \cite{Bernardo_2007}.  As we shall see in the following sections, this simple model is already mathematically interesting.\footnote{Any differential equation whose flow in a given region (increasing versus decreasing) is monotonic in both $x$ and $y$ in all of the states can be treated similarly using the method we describe in Section 2.3 through an appropriate Poincar\'e section.  It is expected to produce qualitatively similar results, as the detailed flow between state transitions is irrelevant once the intersections with Poincar\'e section have been determined.  
}

Additionally, note that we could have added an additional positive parameter $\epsilon\ll{1}$ to each equation to prevent the degeneracy of the $(x,y) = (0,0)$ equilibrium point that occurs for all three equations.\footnote{This degeneracy can also be conveniently avoided by restricting the dynamics of $x$ and $y$ to a region that excludes the point $(0,0)$. We opt for the latter choice (see the next subsection for details).}


\subsection{Switching Conditions}

The dynamics within each state do not fully specify the equations governing a single cow.  To close the bovine equations, we also need switching conditions that determine how the state variable $\theta$ changes.  We illustrate these switching conditions in Fig.~\ref{fig:StateSwitch} and describe them in terms of equations as follows:
\begin{equation}\label{eq:singlecowSwitch}
	\theta\rightarrow
	\begin{cases}
		\cal{E}&\mbox{if $\theta\in\{\cal{R,S}\}$ and $x=1$\,,}\\
		\cal{R}&\mbox{if $\theta\in\{\cal{E,S}\}$ and $x<1$\,, $y=1$\,,}\\
		\cal{S}&\mbox{if $\theta\in\{\cal{E,R}\}$ and $x<1$\,, $y=\delta$ (or $x=\delta\,, y<1$)\,.}
	\end{cases}
\end{equation}
The positive number $\delta<1$ allows the point $(x,y)=(0,0)$ to be excluded from the domain, so that the degenerate equilibrium at that point becomes a so-called \textit{virtual equilibrium point} (i.e., an equilibrium point that is never actually reached by the system) \cite{Bernardo_2007}.

Equations (\ref{eq:singlecowE}, \ref{eq:singlecowR}, \ref{eq:singlecowS}, \ref{eq:singlecowSwitch}) form a complete set of equations describing our \textit{single cow model}. This bovine model is a piecewise smooth dynamical system, to which some important elements of the traditional theory for smooth dynamical systems do not apply, as discussed in depth in the recent book \cite{Bernardo_2007}.

\begin{figure}[htcp]
\centering
\includegraphics[scale=0.5]{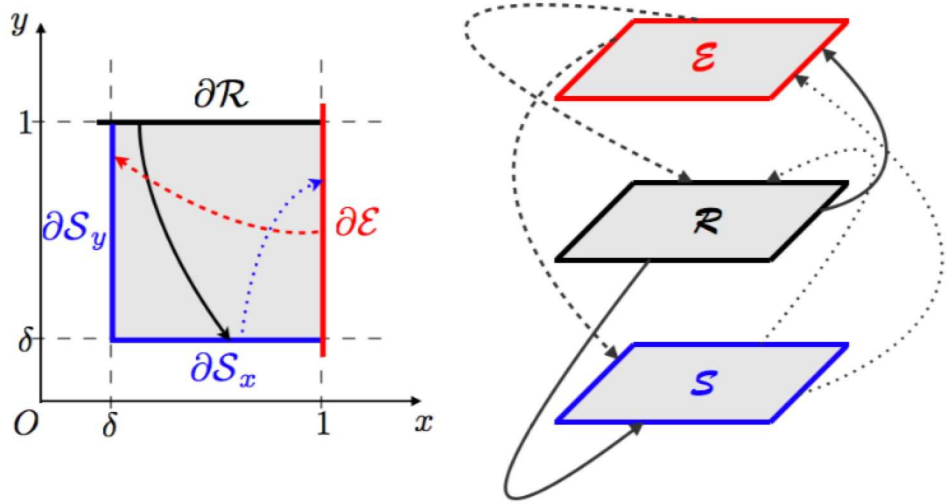}
\caption{(Color online) Switching conditions for the single cow model. In the left panel, we project the set $[\delta,1]\times[\delta,1]\times\Theta$ on $\mathbb{R}^2$, where edges of the square correspond to the borders at which switching occurs. In the right panel, we show the detailed switching situations; an arrow from one edge to another indicates the change of $\theta$ at that edge from one state to the other.  (The arrows with solid curves are the ones that leave state $\mathcal{R}$, those with dashed curves are the ones that leave state $\mathcal{E}$, and those with dotted curves are the ones that leave state $\mathcal{S}$.)
}
\label{fig:StateSwitch}
\end{figure}


\subsection{Discrete Representation}


Although it is straightforward to solve Eqs.~(\ref{eq:singlecowE}, \ref{eq:singlecowR}, \ref{eq:singlecowS}) for the fixed state $\theta$, it is cumbersome to use such a formula to obtain analytical expressions when the flow involves discontinuous changes in $\theta$ (as specified by the switching conditions). Therefore, we instead study the dynamics on the boundaries as discrete maps rather than the flow on the whole domain. We accomplish this by appropriately defining a {Poincar\'e section} \cite{Perko_1996} as the surface
\begin{eqnarray}
	\Sigma&\equiv&\{(x,y;\theta)|x=1\,,\delta\leq{y}\leq{1}\,,\theta=\mathcal{E}\}\cup\{(x,y;\theta)|\delta\leq{x}<{1}\,,y=1\,,\theta=\mathcal{R}\}\nonumber\\
	&=&\mathcal{\partial{E}}\cup\mathcal{\partial{R}}\,,
\end{eqnarray}
which is {transverse} to the flow of Eqs.~(\ref{eq:singlecowE}, \ref{eq:singlecowR}, \ref{eq:singlecowS}) as long as $\alpha_{1,2}$ and $\beta_{1,2}>0$.  (See the Appendix for the proof.)  Furthermore, any flow for which all four of these parameters are positive intersects $\Sigma$ recurrently (again see the Appendix).

Although $\Sigma$ itself is sufficient to construct a Poincar\'e map (we will use $f$ to represent this map on $\Sigma$), it is convenient to consider the discrete dynamics on an {extended Poincar\'e section} $\Sigma'$, which we define by adding the other two boundaries of the projected square to $\Sigma$ to obtain
\begin{eqnarray}
	\Sigma'&\equiv&\Sigma\cup\{(x,y;\theta)|x=\delta\,,\delta\leq{y}<1\}\cup\{(x,y;s)|\delta\leq{x}<1\,,y=\delta\}\nonumber\\
	&=&\mathcal{\partial{E}}\cup\mathcal{\partial{R}}\cup\mathcal{\partial{S}}_y\cup\mathcal{\partial{S}}_x\,,
\end{eqnarray}
where $\partial\mathcal{S}_x$ and 
$\partial\mathcal{S}_y$
are used to represent the sets
$\{(x,y;\theta)|x=\delta,\delta\leq{y}<1\}$ and $\{(x,y;\theta)|\delta\leq{x}<1\,,y=\delta\}$, respectively.  We illustrate the extended Poincar\'e section in the left panel of Fig.~\ref{fig:StateSwitch}.
 
The Poincar\'e map on $\Sigma'$ is given by the discrete dynamics $g:\Sigma'\rightarrow\Sigma'$ derived by solving Eqs.~(\ref{eq:singlecowE}, \ref{eq:singlecowR}, \ref{eq:singlecowS}) with respect to appropriate initial conditions.  As we show in the Appendix, this map is given explicitly by
\begin{eqnarray}\label{casemap}
	g(x=1,\delta\leq{y}\leq{1};\mathcal{E}) = 
	\begin{cases}
	 (y^{\frac{\alpha_2}{\beta_1}},1;\mathcal{R})\,, & 
		\mbox{~if~}y\geq\delta^{\frac{\beta_1}{\alpha_2}}\,,\mbox{~~case $(a)$}\,;\\
	(\delta,\delta^{-\frac{\beta_1}{\alpha_2}}y;\mathcal{S})\,, &
		\mbox{~if~}y<\delta^{\frac{\beta_1}{\alpha_2}}\,,\mbox{~~case $(b)$}\,;\\
	\end{cases}\nonumber\\
	g(\delta\leq{x}<1,y=1;\mathcal{R}) = 
	\begin{cases}
	(1,{x}^{\frac{\beta_2}{\alpha_1}};\mathcal{E})\,, &
		\mbox{~if~}x\geq\delta^{\frac{\alpha_1}{\beta_2}}\,,\mbox{~~case $(c)$}\,;\\
	(\delta^{-\frac{\alpha_1}{\beta_2}}x,\delta;\mathcal{S})\,, &
		\mbox{~if~}x<\delta^{\frac{\alpha_1}{\beta_2}}\,,\mbox{~~case $(d)$}\,;\\	
	\end{cases}\nonumber\\
	g(x=\delta,\delta\leq{y}<1;\mathcal{S}) = 
	\begin{cases}
		(1,\delta^{-\frac{\beta_1}{\alpha_1}}y;\mathcal{E})\,, &
			\mbox{~if~}y\leq{\delta}^{\frac{\beta_1}{\alpha_1}}\,,\mbox{~~case $(e)$}\,;\\
		(y^{-\frac{\alpha_1}{\beta_1}}\delta,1;\mathcal{R})\,, &
			\mbox{~if~}y>{\delta}^{\frac{\beta_1}{\alpha_1}}\,,\mbox{~~case $(f)$}\,;\\
	\end{cases}\nonumber\\
	g(\delta<x<1,y=\delta;\mathcal{S}) = 
	\begin{cases}
		(1,x^{-\frac{\beta_1}{\alpha_1}}\delta;\mathcal{E})\,, &
			\mbox{~if~}x\geq{\delta}^{\frac{\alpha_1}{\beta_1}}\,,\mbox{~~case $(g)$}\,;\\
		(\delta^{-\frac{\alpha_1}{\beta_1}}x,1;\mathcal{R})\,, &
			\mbox{~if~}x<{\delta}^{\frac{\alpha_1}{\beta_1}}\,,\mbox{~~case $(h)$}\,.\\
	\end{cases}
\end{eqnarray}
In Fig.~\ref{fig:FlowDemo}, we show all possible mappings on $\Sigma'$ and, in particular, illuminate all of the possible cases in (\ref{casemap}). The Poincar\'e map $f:\Sigma\rightarrow \Sigma$ can be obtained from $g$ (see the discussion in the Appendix).  

\begin{figure}[htcp]
\centering
\includegraphics[scale=0.5]{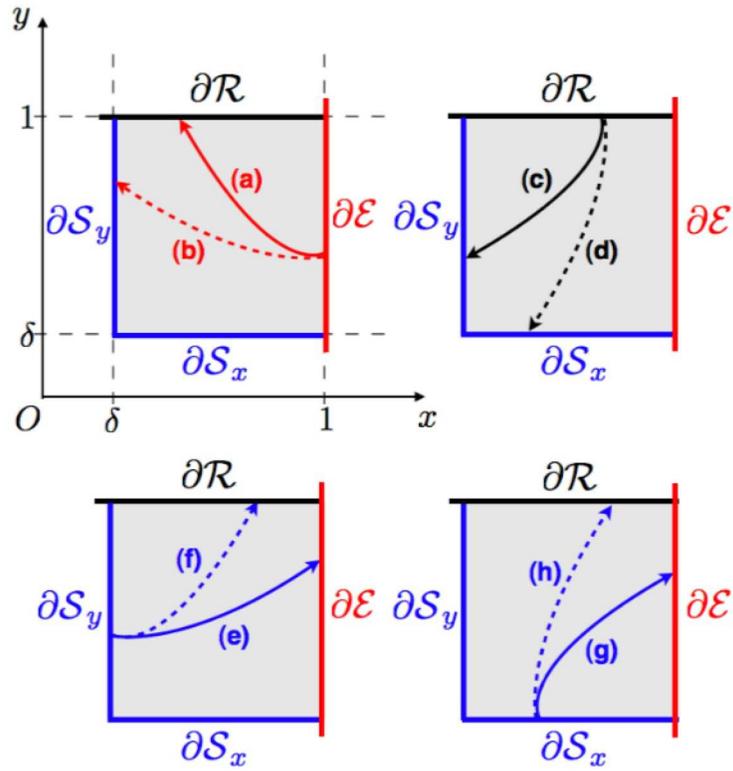}
\caption{(Color online) All of the possible rules for determining the discrete dynamics on $\Sigma'$ that are derived from the original system. For example, from $\theta=\mathcal{E}$, the flow is going to either hit the horizontal $y=1$, which triggers the state $\theta\rightarrow\mathcal{R}$ [case $(a)$], or hit the vertical $x=\delta$, resulting in the transition $\theta\rightarrow\mathcal{S}$ [case $(b)$]. The other three panels similarly demonstrate the other switching possibilities for the variable $\theta$.
}
\label{fig:FlowDemo}
\end{figure}



\section{Analysis of the Single Cow Model} \label{three}

In this section, we summarize a few properties of the single cow model in terms of the discrete dynamics $f$ on $\Sigma$.  Specifically, we give analytical results for the emergence and stability of the fixed point (which is unique) and the period-two orbits on $\Sigma$. We include detailed derivations of these results in the Appendix. We summarize these results in Table~\ref{TableSummary}.

For convenience, we assume that the cow is initially in the state $\mathcal{E}$ with $x = 1$ and $\delta \leq y \leq 1$. If the cow were to start in other situations, it would eventually come to this state. Furthermore, we have chosen to assign the state value $\theta=\mathcal{S}$ to the point $(x,y)=(1,1)$ for as a tie-breaker. Similarly, $\theta=\mathcal{E}$ at $(x,y)=(1,\delta)$ and $\theta=\mathcal{R}$ at $(x,y)=(\delta,1)$, in accordance with Eq.~(\ref{eq:singlecowSwitch}).


\subsection{Fixed Point}

The only possible fixed point on $\Sigma$ is the corner point $(x,y;s)=(1,1;\mathcal{E})$. This fixed point is asymptotically stable if and only if the parameters satisfy
\begin{equation}
	\frac{\alpha_2}{\alpha_1}\cdot\frac{\beta_2}{\beta_1}<1\,.
\end{equation}
Additionally, (when the above condition holds) numerical simulations indicate that the basin of attraction of this fixed point seems to be the entire domain.



\subsection{Period-Two Orbits}

The next simplest type of orbits for the discrete map have period two and correspond to cycles of the flow.  A period-two orbit on $\Sigma$ must contain points for which $\theta=\mathcal{E}$ and $\theta=\mathcal{R}$ appear alternatively.  This can occur in a few different situations (see Fig.~\ref{fig:PeriodTwo}), which we summarize in the following subsections.  We include further details in the Appendix. Note that some of the period-two orbits correspond to higher-period orbits of the discrete dynamics on $\Sigma'$. For convenience, we represent such orbits on $\Sigma'$, with the understanding that when restricted to $\Sigma$ (i.e., when points with symbolic variable $\mathcal{S}$ are excluded), they all have period two.

\begin{figure}[htcp]
\centering
\includegraphics[scale=0.5]{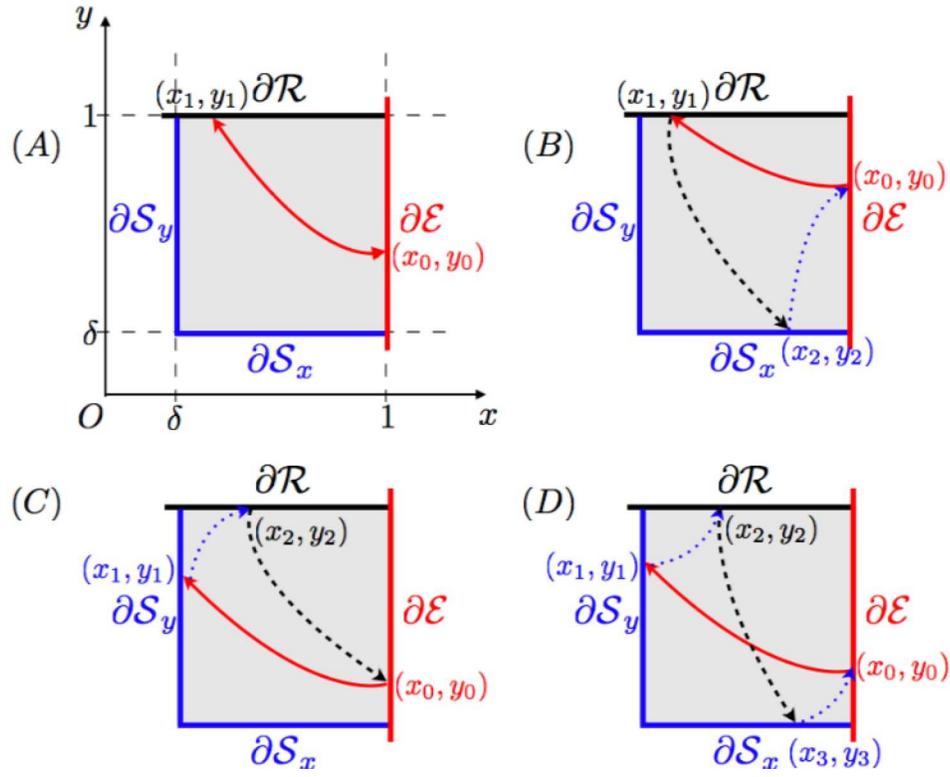}
\caption{(Color online) Illustration of all of the possible period-two orbits on $\Sigma$.
}
\label{fig:PeriodTwo}
\end{figure}


\subsubsection{Case A:~$(x_0,y_0;\mathcal{E})\rightarrow(x_1,y_1;\mathcal{R})\rightarrow(x_0,y_0;\mathcal{E})\rightarrow\dots$}

The existence of such an orbit requires the parameters to satisfy
\begin{equation}
	\frac{\alpha_2}{\alpha_1}\cdot\frac{\beta_2}{\beta_1} = 1\,.
\end{equation}
This implies that there are infinitely many stable but not asymptotically stable period-two orbits.  The initial value of $x$ is $x_0=1$ and the initial value of $y$ (called $y_0$, of course) is in the range
\begin{equation}
	\max\left(\delta,\delta^{\frac{\beta_1}{\alpha_2}}=\delta^{\frac{\beta_2}{\alpha_1}}\right)< y_0 <1\,.
\end{equation}
If $y_0$ is outside of this range, then one can see using numerical simulations that this orbit will necessarily contain a point in this range that 
\textcolor{red}{becomes}
a stable period-two orbit.


\subsubsection{Case B:~$(x_0,y_0;\mathcal{E})\rightarrow(x_1,y_1;\mathcal{R})\rightarrow(x_2,y_2;\mathcal{S}_x)\rightarrow(x_0,y_0;\mathcal{E})\rightarrow\dots$}

The parameters need to satisfy
\begin{equation}
	\frac{\alpha_2}{\alpha_1}\cdot\frac{\beta_2}{\beta_1}>1	
\end{equation}
or else any trajectory either converges to a fixed point or a stable period-two orbit (as discussed above).  It is also necessary that
\begin{equation}
	\frac{1}{\alpha_1}+\frac{1}{\alpha_2}\geq\frac{1}{\beta_1}+\frac{1}{\beta_2}
	\mbox{~if~}\beta_1<\alpha_2\,.
\end{equation}
There is only one period-two orbit of this type if the above two conditions hold.  This implies that $x_0=1$ and
\begin{equation}
	y_0=\delta^{\frac{1+\beta_1/\beta_2}{1+\alpha_2/\alpha_1}}\,.
\end{equation}
This periodic orbit is asymptotically stable if and only if
\begin{equation}\label{aa}
	\alpha_2<\alpha_1\,.
\end{equation}
That is, the orbit is asymptotically stable if and only if the rate at which a cow becomes sated while it eating is slower than the rate at which it becomes hungrier when it is not eating.


\subsubsection{Case C:~$(x_0,y_0;\mathcal{E})\rightarrow(x_1,y_1;\mathcal{S}_y)\rightarrow(x_2,y_2;\mathcal{R})\rightarrow(x_0,y_0;\mathcal{E})\rightarrow\dots$}

Again, we first need
\begin{equation}
	\frac{\alpha_2}{\alpha_1}\cdot\frac{\beta_2}{\beta_1}>1\,.
\end{equation}
Additionally,
\begin{equation}
	\frac{1}{\alpha_1}+\frac{1}{\alpha_2}<\frac{1}{\beta_1}+\frac{1}{\beta_2}
	\mbox{~and~}\beta_1<\alpha_2\,.
\end{equation}
There is also only one period-two orbit of this type; it has $x_0=1$ and
\begin{equation}
	y_0 = \delta^{\frac{1/\alpha_1+1/\alpha_2}{1/\beta_1+1/\beta_2}}\,.
\end{equation}
This orbit is asymptotically stable if and only if
\begin{equation}
	\beta_2<\beta_1\,.
\end{equation}
This case is analogous to case B, except that the roles of lying down and eating have been reversed.  Hence, this period-two orbit is asymptotically stable if and only if the rate at which a cow desires to get up when it is lying down is slower than the rate at which it increases its desire to lie down when it is not lying down.


\subsubsection{Case D:~$(x_0,y_0;\mathcal{E})\rightarrow(x_1,y_1;\mathcal{S}_y)\rightarrow(x_2,y_2;\mathcal{R})\rightarrow(x_3,y_3;\mathcal{S}_x)\rightarrow(x_0,y_0;\mathcal{E})\rightarrow\dots$}

The appearance of this orbit requires the following conditions to be satisfied:
\begin{eqnarray}
	\frac{\alpha_2}{\alpha_1}\cdot\frac{\beta_2}{\beta_1}&>&1\,,\nonumber\\
	\frac{1}{\alpha_1}+\frac{1}{\alpha_2}&=&\frac{1}{\beta_1}+\frac{1}{\beta_2}
	\mbox{~and~}\beta_1<\alpha_2\,.
\end{eqnarray}
There are infinitely many such orbits, which satisfy $x_0=1$ and 
\begin{equation}
	\delta<y_0<\delta^{\frac{\beta_1}{\alpha_2}}\,.
\end{equation}
All of these orbits are stable but not asymptotically stable.

\subsubsection{Summary}
We summarize the emergence of low-period orbits (up to period two) of $f:\Sigma\rightarrow\Sigma$ in different parameter ranges in Table~\ref{TableSummary}.

\begin{table}
\caption{Summary of low-period orbits (up to period two) and their stability of the single cow dynamics restricted to the {Poincar\'e} section $\Sigma$. All orbits except for the first one are period-two orbits on $\Sigma$. In the `Stability' column, we use `a.s' as an abbreviation for `asymptotically stable'.}
    \begin{tabular}{ | l | l | l | l | }
    \hline
    Parameters & Orbit & Condition on $y_0$ & Stability \\ \hline
    $\dfrac{\alpha_2}{\alpha_1}\cdot\dfrac{\beta_2}{\beta_1}<1$ & 
      $\{(1,1;\mathcal{E})\}$ & 
        none & 
          a.s \\ \hline
    $\dfrac{\alpha_2}{\alpha_1}\cdot\dfrac{\beta_2}{\beta_1}=1$ & 
      $\{(1,y_0;\mathcal{E}),(y_{0}^{\frac{\alpha_2}{\beta_1}},1;\mathcal{R})\}$&
        $\max{(\delta,\delta^{\frac{\beta_1}{\alpha_2}})}<y_0<1$&
    	  stable \\ \hline
    $\dfrac{\alpha_2}{\alpha_1}\cdot\dfrac{\beta_2}{\beta_1}>1, \alpha_2<\beta_1$ & 
      $\{(1,y_0;\mathcal{E}),(y_{0}^{\frac{\alpha_2}{\beta_1}},1;\mathcal{R})\}$ &
        $y_0=\delta^{\frac{1+\frac{\beta_1}{\beta_2}}{1+\frac{\alpha_2}{\alpha_1}}}$& 
          a.s iff $\alpha_2<\alpha_1$\\ \hline
    $\begin{cases}\dfrac{\alpha_2}{\alpha_1}\cdot\dfrac{\beta_2}{\beta_1}>1, \alpha_2>\beta_1;\\
    	\dfrac{1}{\alpha_1}+\dfrac{1}{\alpha_2}\geq\dfrac{1}{\beta_1}+\dfrac{1}{\beta_2}\end{cases}$&
    		$\{(1,y_0;\mathcal{E}),(y_{0}^{\frac{\alpha_2}{\beta_1}},1;\mathcal{R})\}$ & 
	$y_0=\delta^{\frac{1+\frac{\beta_1}{\beta_2}}{1+\frac{\alpha_2}{\alpha_1}}}$& 
	a.s iff $\alpha_2<\alpha_1$\\ \hline
    $\begin{cases}\dfrac{\alpha_2}{\alpha_1}\cdot\dfrac{\beta_2}{\beta_1}>1, \alpha_2>\beta_1;\\
    	\dfrac{1}{\alpha_1}+\dfrac{1}{\alpha_2}<\dfrac{1}{\beta_1}+\dfrac{1}{\beta_2}\end{cases}$&
    	  $\{(1,y_0;\mathcal{E}),(\delta,\delta^{-\frac{\beta_1}{\alpha_2}}y_0;\mathcal{R})\}$ & 
	    $y_0=\delta^{\frac{\frac{1}{\alpha_1}+\frac{1}{\alpha_2}}{\frac{1}{\beta_1}+\frac{1}{\beta_2}}}$& 
	      a.s iff $\beta_2<\beta_1$\\ \hline
    $\begin{cases}\dfrac{\alpha_2}{\alpha_1}\cdot\dfrac{\beta_2}{\beta_1}>1, \alpha_2>\beta_1;\\
    	\dfrac{1}{\alpha_1}+\dfrac{1}{\alpha_2}=\dfrac{1}{\beta_1}+\dfrac{1}{\beta_2}\end{cases}$&
    	  $\{(1,y_0;\mathcal{E}),(\delta^{1+\frac{\alpha_1}{\alpha_2}}y_{0}^{-\frac{\alpha_1}{\beta_1}},1;\mathcal{R})\}$ & 
	    $\delta<y_0<\delta^{\frac{\beta_1}{\alpha_2}}$ & 
	      stable \\ \hline
     \end{tabular}
     \label{TableSummary}
\end{table}


\subsection{Grazing Bifurcations}

We remark that the single cow equations cannot exhibit grazing bifurcations.\footnote{In the theory of piecewise smooth dynamical systems, a \emph{grazing bifurcation} is said to occur when a limit cycle of a flow becomes tangent to a discontinuity boundary  \cite{Bernardo_2007,scholarpiecewise}.}


\subsection{Higher-Period Orbits and Bifurcation Diagram}

Although one could proceed to analyze more complicated orbits, this is not the main topic of this paper.  Instead, we simply illustrate the existence of more complicated (possibly chaotic) orbits through a bifurcation diagram (see Fig.~\ref{fig:Bifurcation}) by simulation with varying one of the parameters. This parameter, which we choose to be $\alpha_2$, seems to be transverse to the unfolding of the bifurcation and reveals rich dynamics in our model.

\begin{figure}[htcp]
\centering
\includegraphics[scale=0.7]{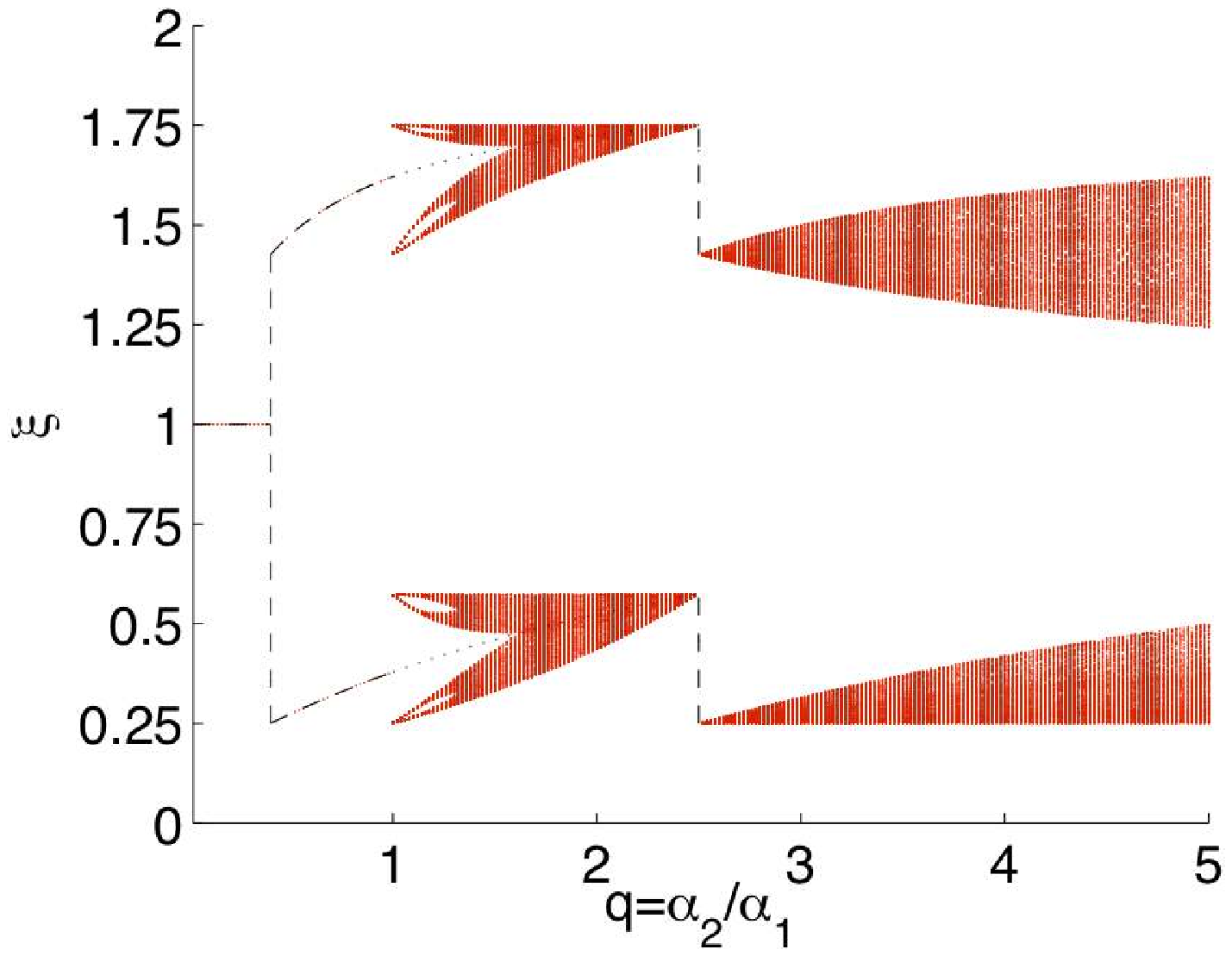}
\includegraphics[scale=0.7]{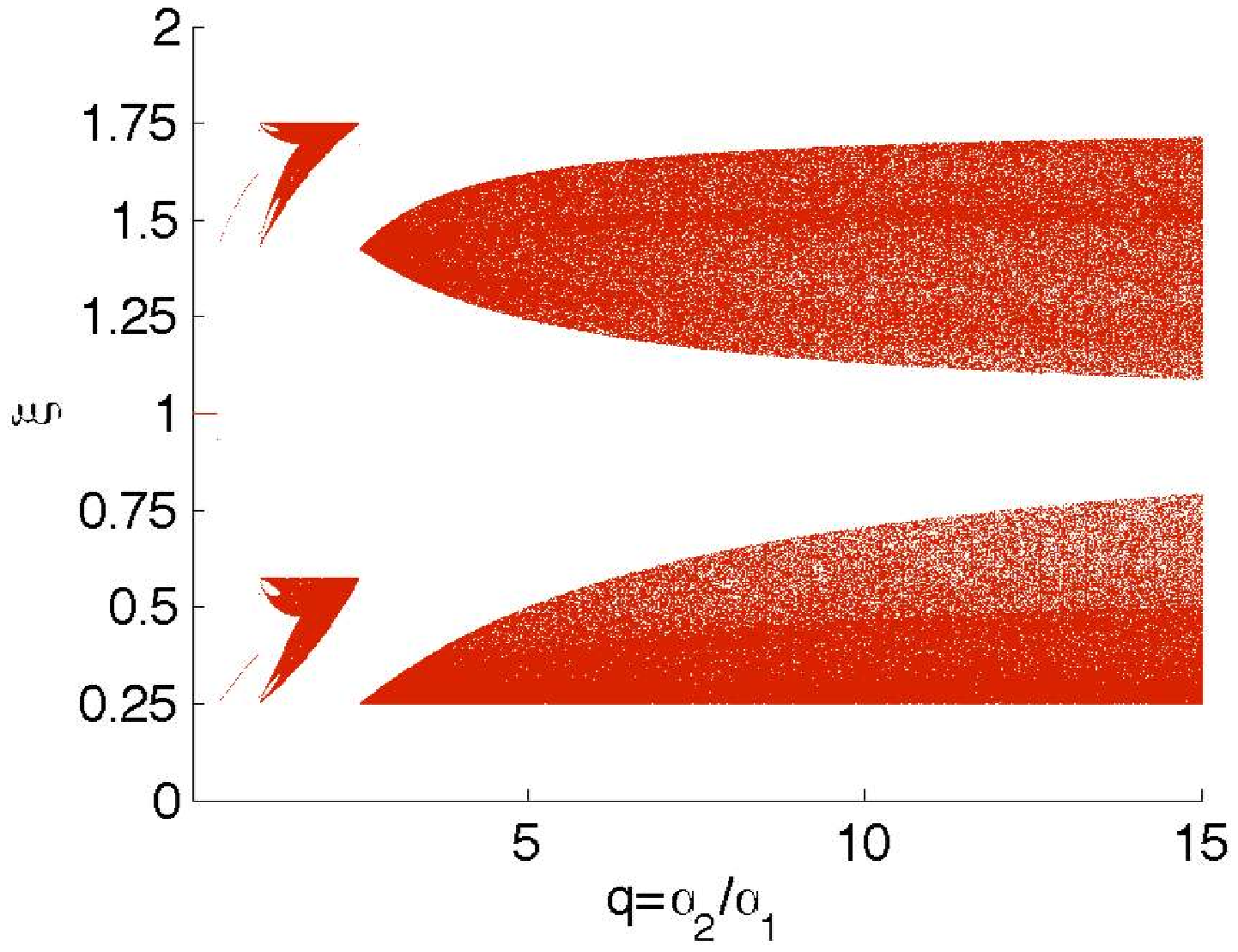}
\caption{(Color online) Bifurcation diagram for the discrete dynamics $f$ on $\Sigma$.  We fix the parameter values $\alpha_1=0.05$, $\beta_1=0.05$, $\beta_2=0.125$, and $\delta=0.25$. 
The vertical axis is $\xi=y+(1-x)$, corresponding to the points $(x,y)$ on $\Sigma$.
In the top panel, we show the diagram for which $q\equiv\frac{\alpha_2}{\alpha_1}$ ranges from $0$ to $5$; dashed and dotted curves give theoretical results, which we summarize in Table \ref{TableSummary}. In the bottom panel, we show the diagram for $q$ from $0$ to $15$. If we further increase $q$, the two large finger-like bands on the right of the diagram retain their shape and become progressively closer.  Numerical simulations suggest that the distance between them tends to $0$ as $q\rightarrow\infty$.  
}
\label{fig:Bifurcation}
\end{figure}

For a wide range of parameters, there seems to always be a dense subset (for a fixed set of parameters) of the domain that attracts ``typical" (in the sense of nonzero measure) initial conditions.  We show one of these (likely chaotic) orbits in Fig.~\ref{fig:TypicalOrbit}.  We connect the dots by straight lines in order to illustrate the end points of the flow touching the boundaries, although the actual trajectories between points on the boundaries are convex curves.  We remark that one can think of the discrete dynamics on $\Sigma'$ as a billiard-like problem (see Refs.~\cite{predrag,sinai} and references therein for discussions of billiards) with nontrivial bouncing rules on the boundary and nonlinear potentials that determine the trajectories of particles between collisions with the boundary.

\begin{figure}[htcp]
\centering
\includegraphics[scale=0.7]{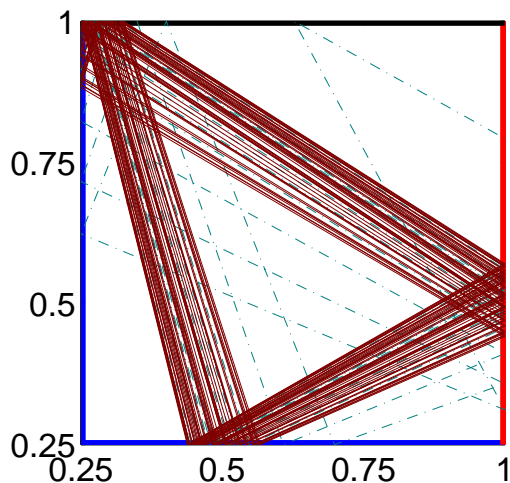}
\caption{(Color online) A typical discrete orbit (thin solid lines) on $\Sigma$ for the parameters $\alpha_1=0.05$, $\alpha_2 =0.1$, $\beta_1=0.05$, $\beta_2=0.125$, and $\delta=0.25$. We depict the case corresponding to $q=2$ in Fig.~\ref{fig:Bifurcation}.  The dashed lines show transient dynamics.    We highlight the boundaries using thick solid lines.  For aesthetic reasons, we join successive points on $\Sigma$ with straight lines, and we note that the actual flow that connects these points are piecewise convex curves.
}
\label{fig:TypicalOrbit}
\end{figure}


\section{Coupled Cows and Synchronization} \label{four}

As we discussed in the introduction, there are many biological benefits to achieving synchronized eating and lying down in cattle.  We are thus motivated to construct \emph{herd equations} that describe interacting cows by coupling the single cow equations 
(\ref{eq:singlecowE}--\ref{eq:singlecowSwitch}).
We make specific choices motivated by biology and simplicity, though it  is of course important to consider both more complicated choices and alternative forms of coupling.  Our goal is to highlight just one possible form of the interactions in detail, but we hope that our work will serve as a springboard for rumination of some of the alternatives that we will mention briefly in Section \ref{five}. 

In this section, we numerically investigate the effect of coupling in a system of a few cows. For the purpose of simplifying the exposition of the equations, we use indicator functions defined on the set $\{\mathcal{E},\mathcal{R},\mathcal{S}\}$:
\begin{equation}
	\chi_{\psi}(\theta)\equiv
	\begin{cases}
	1\,, & \mbox{~if~}\theta=\psi\,,\\
	0\,, & \mbox{~otherwise~}\,.
	\end{cases}
\end{equation}

The single cow equation in between state transitions (\ref{eq:singlecow}) can then be written as
\begin{equation}
	\begin{cases}
		\dot{x} = \alpha(\theta)x\,,\\
		\dot{y} = \beta(\theta)y\,,
	\end{cases}
\end{equation}
where we have defined functions
\begin{equation}
	\begin{cases}
		\alpha(\theta)=-\alpha_2\chi_{\mathcal{E}}(\theta)+\alpha_1\chi_{\mathcal{R}}(\theta)+\alpha_1\chi_{\mathcal{S}}(\theta)\,,\\
		\beta(s)=\beta_1\chi_{\mathcal{E}}(\theta)-\beta_2\chi_{\mathcal{R}}(\theta)+\beta_1\chi_{\mathcal{S}}(\theta)\,.
	\end{cases}
\end{equation}


\subsection{Coupling Scheme}

There are numerous possible ways to model the coupling between cows. We have chosen one based on the hypothesis that a cow feels hungrier when it notices the other cows eating and feels a greater desire to lie down when it notices other cows lying down.  (We briefly discuss other possibilities in Section \ref{five}.)  This provides a coupling that does not have a spatial component, in contrast to the agent-based approach of Ref.~\cite{franz}.  We therefore assume implicitly that space is unlimited, so we are considering cows to be in a field rather than in a pen.  We suppose that the herd consists of $n$ cows and use $i$ to represent the $i$-th cow in the herd.  This yields herd equations given by
\begin{equation}\label{eq:HerdEquations}
	\begin{cases}
		\dot{x_i}=\big[\alpha^{(i)}(s_i) + \frac{\sigma_{x}}{k_i}\sum_{j=1}^{n}a_{ij}\chi_{\mathcal{E}}(s_j)\big]x_i\,,\\
		\dot{y_i}=\big[\beta^{(i)}(s_i) + \frac{\sigma_{y}}{k_i}\sum_{j=1}^{n}a_{ij}\chi_{\mathcal{R}}(s_j)\big]y_i\,,
	\end{cases}
\end{equation}
with switching condition according to Eq.~(\ref{eq:singlecowSwitch}) for each individual cow.
The second terms in both equations give the coupling terms of this system.  The matrix $A=[a_{ij}]_{n\times{n}}$ is a {time-dependent} adjacency matrix that represents the network of cows.  Its components are given by
\begin{equation}
	a_{ij}(t) = 
	\begin{cases}
		1 & \mbox{~if the $i$-th cow interacts with the $j$-th cow at time $t$}\,,\\
		0 & \mbox{~if the $i$-th cow does not interact with the $j$-th cow at time $t$}\,.
	\end{cases}
\end{equation}
Additionally, $k_i=\sum_{j=1}^{n}A_{ij}$ is the {degree} of node $i$ (i.e., the number of cows to which it is connected), and the coupling strengths $\sigma_x$ and $\sigma_y$ are non-negative (and usually positive) real numbers corresponding to the strength of coupling.  This is designed to emphasize that animal interaction strengths consider proximity to neighboring animals.

It is important to note that in the case where $A$ is time-independent, the dynamics governing the network of interacting cows only changes when at least one of the individual cows changes its state $\theta_i$. In practice, we can solve analytically for the flows in between such transitions (because they are piecewise affine differential equations) instead of performing numerical integration in the whole time interval, which might cause numerical instability when the number of transitions becomes large.


\subsection{Measuring Synchrony}


We also need a measure for the synchrony between cows. For each cow $i$, let $\tau^{(i)}$ and $\kappa^{(i)}$ be vectors such that
\begin{equation}
	\begin{cases}
	\tau^{(i)}_{k}\equiv\mbox{The $k$-th time at which the $i$-th cow switches its state to $\mathcal{E}$}\,,\\
	\kappa^{(i)}_{k}\equiv\mbox{The $k$-th time at which the $i$-th cow switches its state to $\mathcal{R}$}\,.
	\end{cases}
\end{equation}

Given pairs of vectors $\tau^{(i)}$ and $\tau^{(j)}$ of the same length, the ``eating" synchrony between cows $i$ and $j$ is measured by
\begin{equation}
	\Delta^{\mathcal{E}}_{ij}\equiv\langle|\tau^{(i)}-\tau^{(j)}|\rangle\,
	= \frac{1}{K}\sum_{k=1}^{K}{|\tau^{(i)}_{k}-\tau^{(j)}_{k}|},
\end{equation}
where $\langle \cdot \rangle$ denotes time-averaging.
In general, the vectors $\tau^{(i)}$ and $\tau^{(j)}$ are of different lengths, so we truncate and shift one of them to match up with the other in such a way that it gives approximately the minimal $\Delta^{\mathcal{E}}_{ij}$ as defined above.

Similarly, we define the ``lying" synchrony between cows $i$ and $j$ by
\begin{equation}
	\Delta^{\mathcal{R}}_{ij}\equiv\langle|\kappa^{(i)}-\kappa^{(j)}|\rangle\,.
\end{equation}

For $n$ cows, the group ``eating" and ``lying" synchrony are then measured by averaging over all of the synchrony between individual pairs:
\begin{equation}\label{eq:SyncMeasureER}
	\begin{cases}
	\Delta^{\mathcal{E}}\equiv\langle\Delta^{\mathcal{E}}_{ij}\rangle\, = \frac{1}{n^2}\sum_{i,j}\Delta^{\mathcal{E}}_{ij},\\
	\Delta^{\mathcal{R}}\equiv\langle\Delta^{\mathcal{R}}_{ij}\rangle\, = \frac{1}{n^2}\sum_{i,j}\Delta^{\mathcal{R}}_{ij},
	\end{cases}
\end{equation}
and the aggregate synchrony can then be measured via
\begin{equation}
	\Delta\equiv\Delta^{\mathcal{E}}+\Delta^{\mathcal{R}}.
\end{equation}
There are, of course, other possible measures of synchrony that one could employ.
 For example, in his agent-based study, Franz \cite{franz} considered kappa statistics, an order parameter adapted from the usual one used in the Kuramoto model, and a direct count of how often all cows are lying down \cite{nielsen97,faerevik08}.


\subsection{Numerical Exploration of Herd Synchrony}

With the tools described above, we are now ready to show some examples of synchronization of cows. We will start with a system consisting of only  two cows and then consider herds with more than two cows.


\subsubsection{Two Coupled Cows}

We first examine how the coupling strength affects the extent of synchronization. Assume that the two cows have individual dynamics that are specified by nearly identical parameters: 
\begin{eqnarray}\label{eq:CoupledTwoCows}
	\alpha_{1}^{(1,2)}&=&0.05\pm{\epsilon}\,, \quad \alpha_{2}^{(1,2)}=0.1\pm\epsilon\,,\nonumber\\
	\beta_{1}^{(1,2)}&=&0.05\pm\epsilon\,, \quad \beta_{2}^{(1,2)}=0.125\pm\epsilon\,,\nonumber\\
	\delta&=&0.25\,.
\end{eqnarray}
We show simulation results in Fig.~\ref{fig:TwoCows} and Fig.~\ref{fig:TwoCowsSync} to illustrate the dependence of synchrony both on the parameter mismatch $\epsilon$ and on the coupling strength $\sigma_{x}, \sigma_{y}$.

\begin{figure}[htcp]
\includegraphics[scale=0.43]{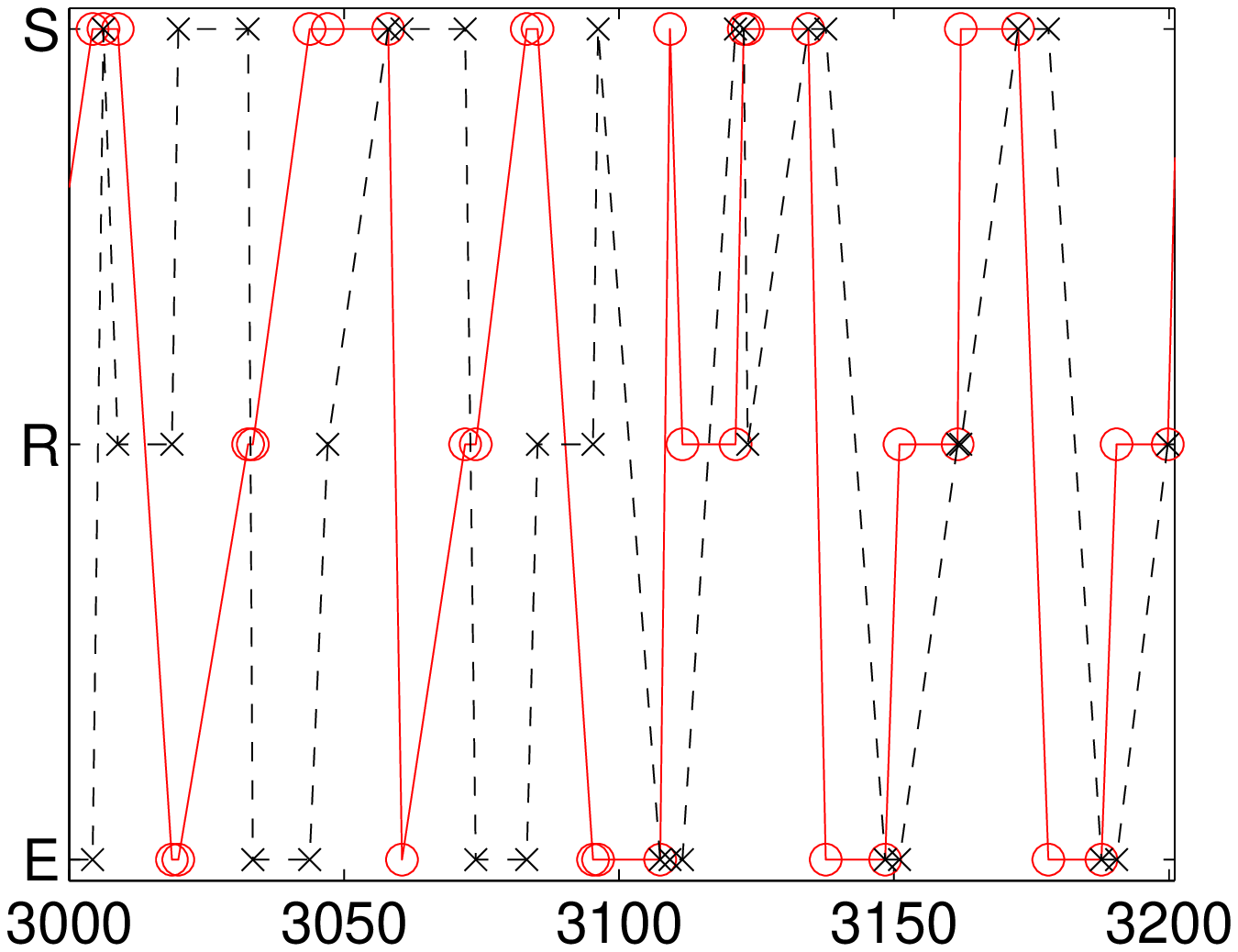}
\includegraphics[scale=0.43]{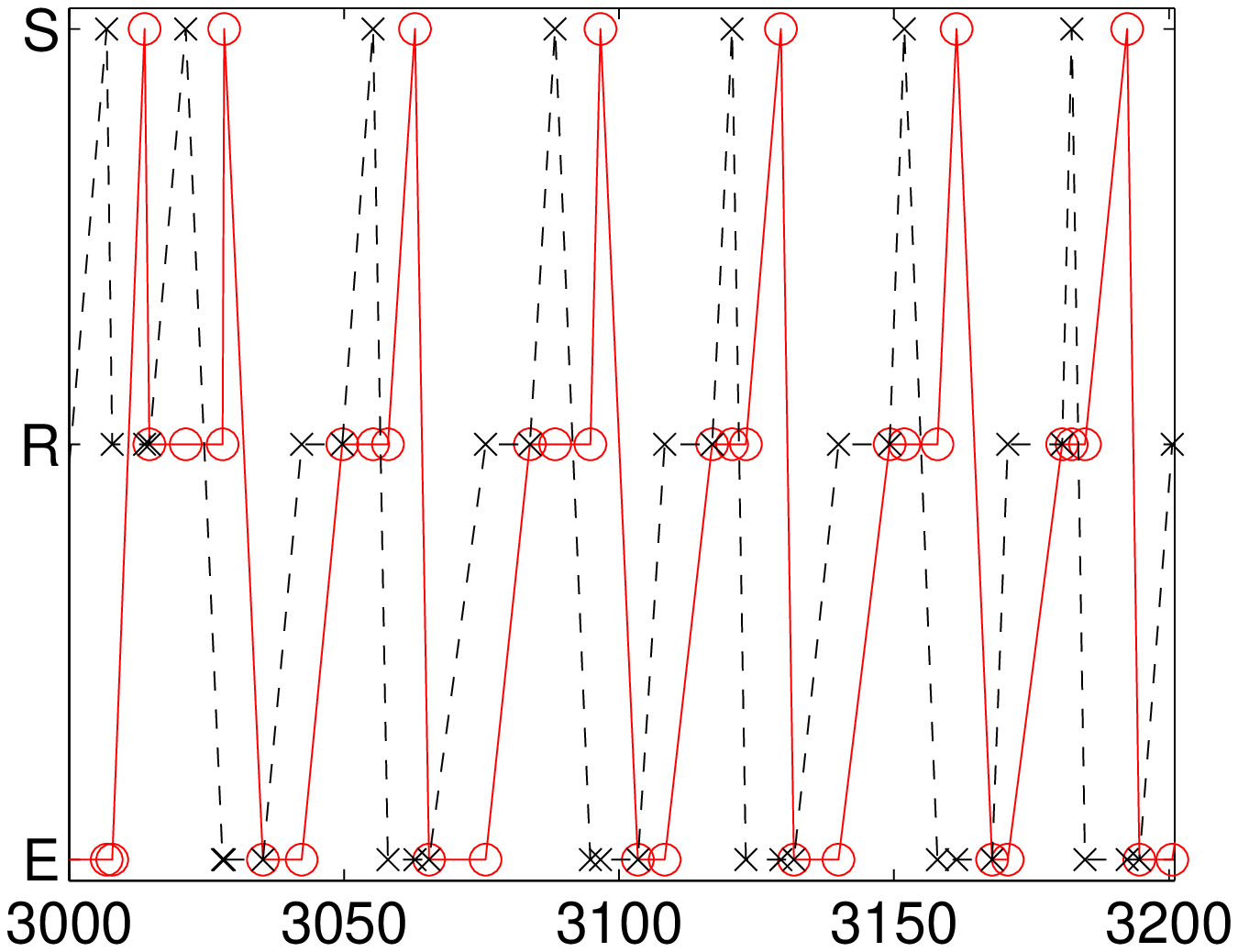}
\caption{(Color online) Typical time series of the state variables $\theta_{1,2}$ for different coupling strengths.  The system of equations is described by Eq.~(\ref{eq:HerdEquations}), and the parameter values are given in 
Eq.~(\ref{eq:CoupledTwoCows}).  
The parameter mismatch between the two cows is $\epsilon=10^{-3}$.  The horizontal axis is time $t$.  The left panel shows the transition of states $\theta_1$ (red circles connected by $`-'$) and $\theta_2$ (black crosses connected by $`--'$) of a typical time series with the coupling strengths $\sigma_x=\sigma_y=0$ (i.e., when there is no coupling).  The right panel shows a similar plot with the coupling strengths $\sigma_x=\sigma_y=0.045$.
}
\label{fig:TwoCows}
\end{figure}

\begin{figure}[htcp]
\includegraphics[scale=0.43]{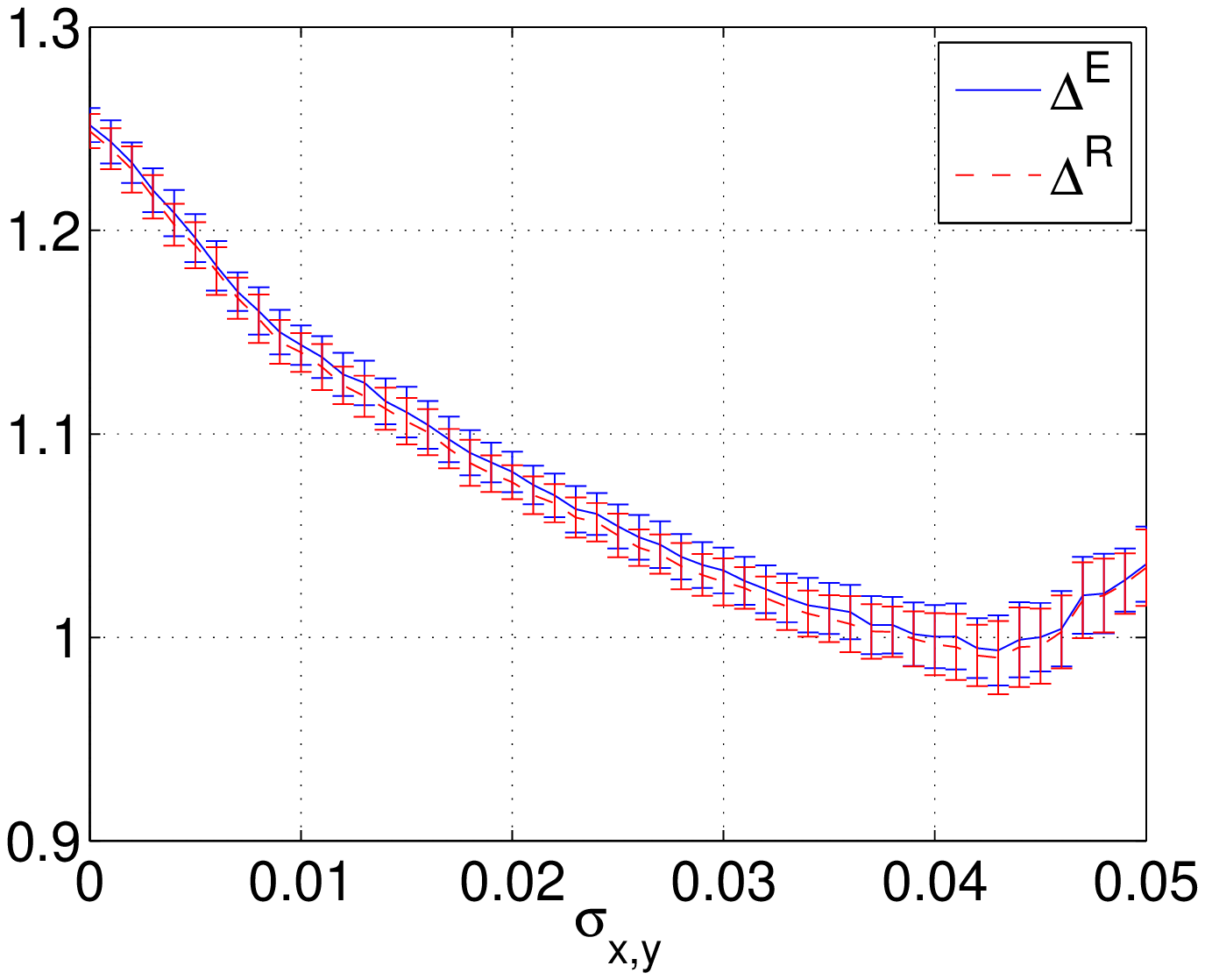}
\includegraphics[scale=0.43]{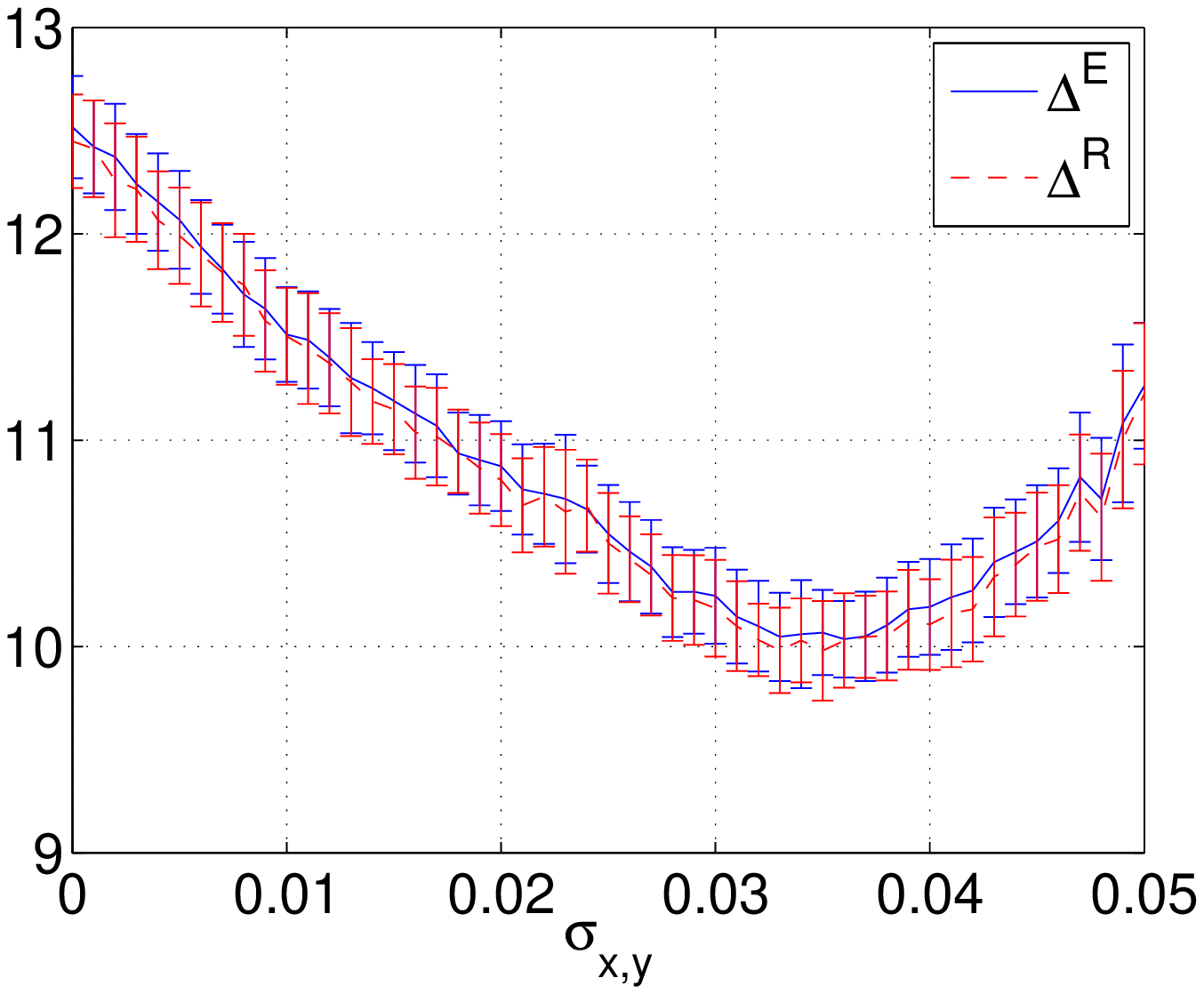}
\caption{(Color online) Dependence of synchrony on coupling strength. The system of equations and parameters are specifed by Eq.~(\ref{eq:HerdEquations}) and Eq.~(\ref{eq:CoupledTwoCows}), respectively.  In the left panel, we illustrate the synchronization error, which we measure using Eq.~(\ref{eq:SyncMeasureER}), for different coupling strengths $\sigma_{x,y}$  for two coupled cows whose parameter mismatch is $\epsilon=10^{-3}$.  In the right panel, we show the synchronization error for parameter mismatch $\epsilon=10^{-2}$.  
We obtain each curve (for a fixed $\epsilon$) 
by averaging over 50 runs.  Each run is an independent realization of the herd equations starting from an initial condition chosen uniformly at random.
Vertical error bars indicate one standard deviation from the mean.
}
\label{fig:TwoCowsSync}
\end{figure}

These pictures suggests that our measure of synchrony is reasonable 
for such a system.  The greater the difference between the two cows, the harder it is for them to achieve synchrony.  However, the dependence of synchrony is not necessarily monotonically dependent on the coupling strength. An increase in the coupling strength at the beginning does improve synchrony, but there is a point beyond which larger coupling can in fact lead to lower synchrony.



\subsubsection{Network of Coupled Cows}

In this subsection, we show numerical results on synchronization among a few cows. In all examples, we consider a herd of $n=10$ cows, where each individual has parameter values slightly perturbed from $\alpha_1=0.05$, $\alpha_2=0.1$, $\beta_1=0.05$, and $\beta_2=0.125$.  Additionally, we note that $10^{-3}$ is the maximum difference in each parameter value relative to the average parameter among all individuals.  We can couple these cows using different network architectures---for example, a circular lattice and a star graph (see Fig.~\ref{fig:CowNetworks}).  We use these networks only as illustrative examples, as one can of course perform similar investigations with any other network architecture.

\begin{figure}[htcp]
\centering
\includegraphics[scale=0.30]{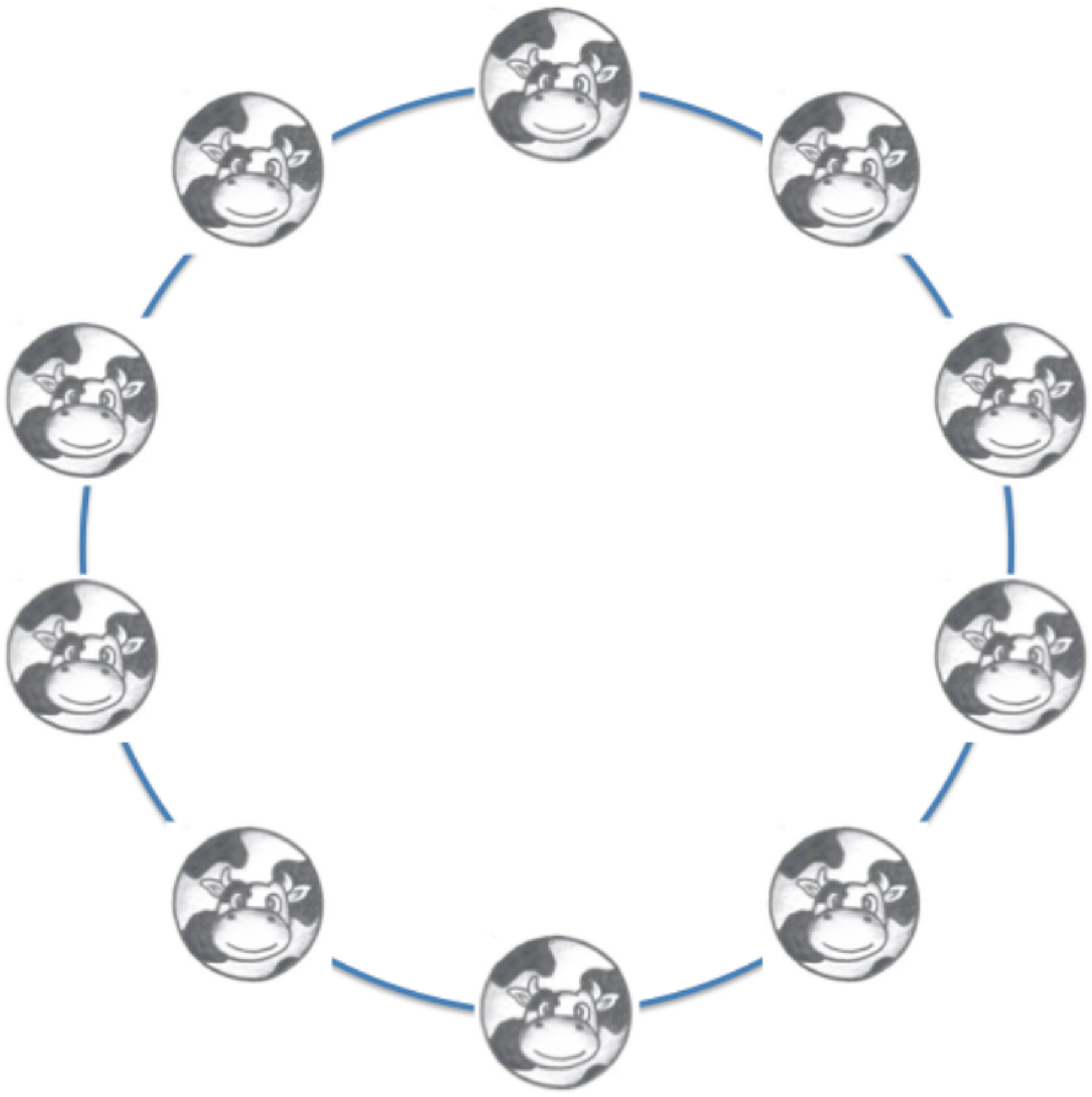}
\includegraphics[scale=0.30]{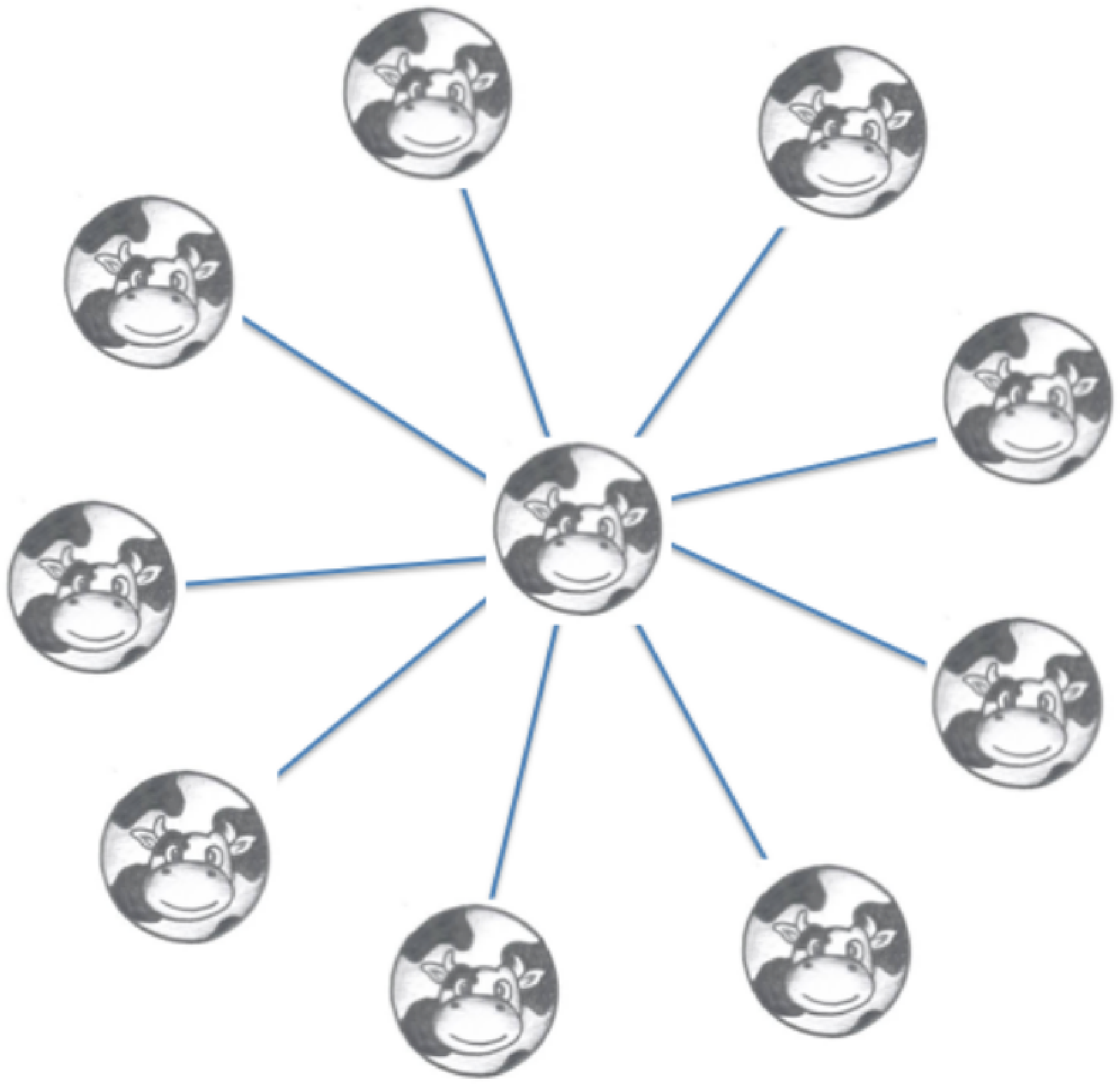}
\caption{(Color online) Example network architectures for coupled cows: (left) Circular lattice with $10$ nodes and (right) star graph with $10$ nodes.  [The spherical cow image was created for this paper by Yulian Ng and used with her permission.]  
}
\label{fig:CowNetworks}
\end{figure}

In Fig.~\ref{fig:TenCows}, we show the state transitions of the ten cows during a small time interval.  We consider fixed coupling strengths $\sigma_x=\sigma_y=0.05$ for each of the two network architectures.

\begin{figure}[htcp]
\includegraphics[scale=0.43]{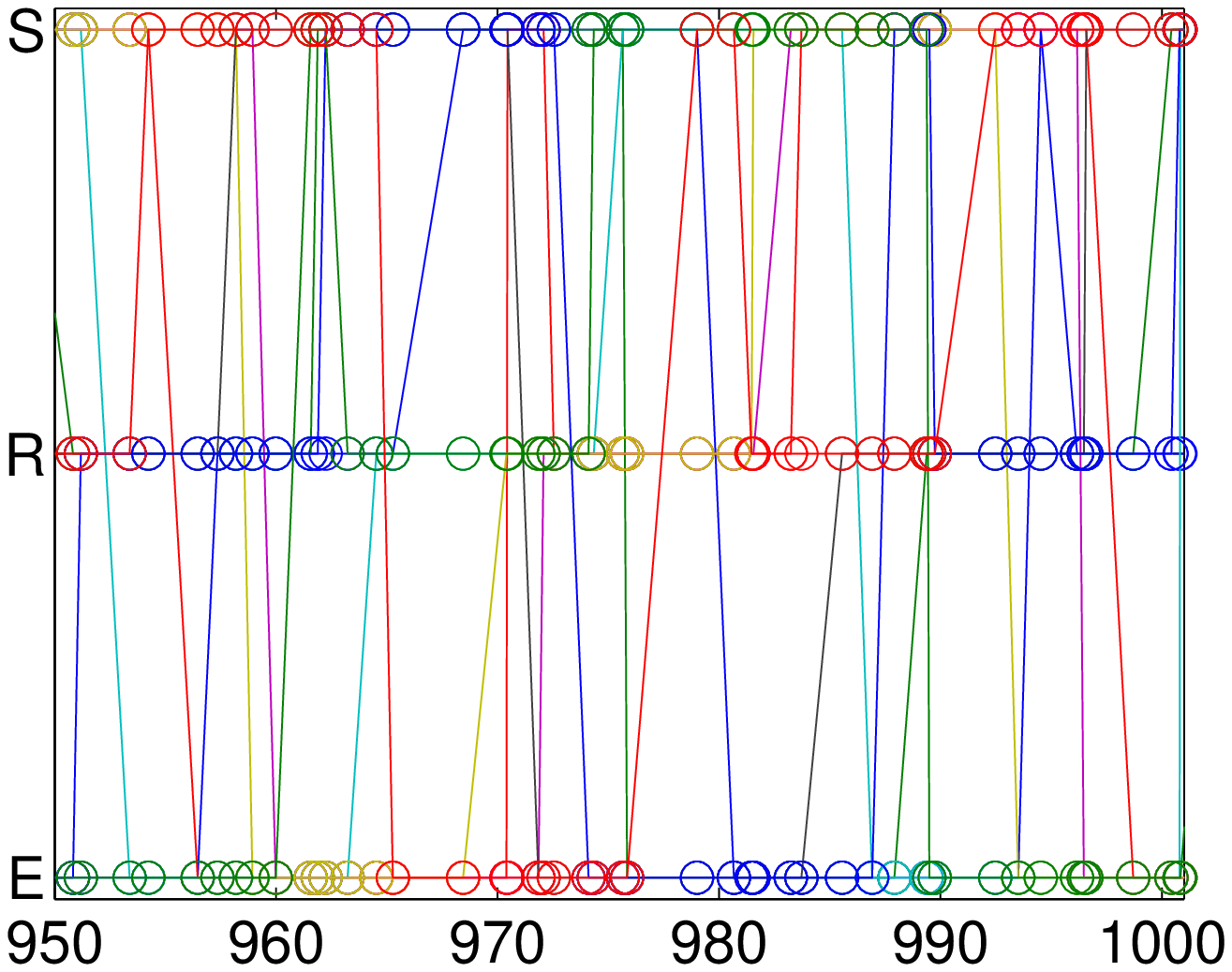}
\includegraphics[scale=0.43]{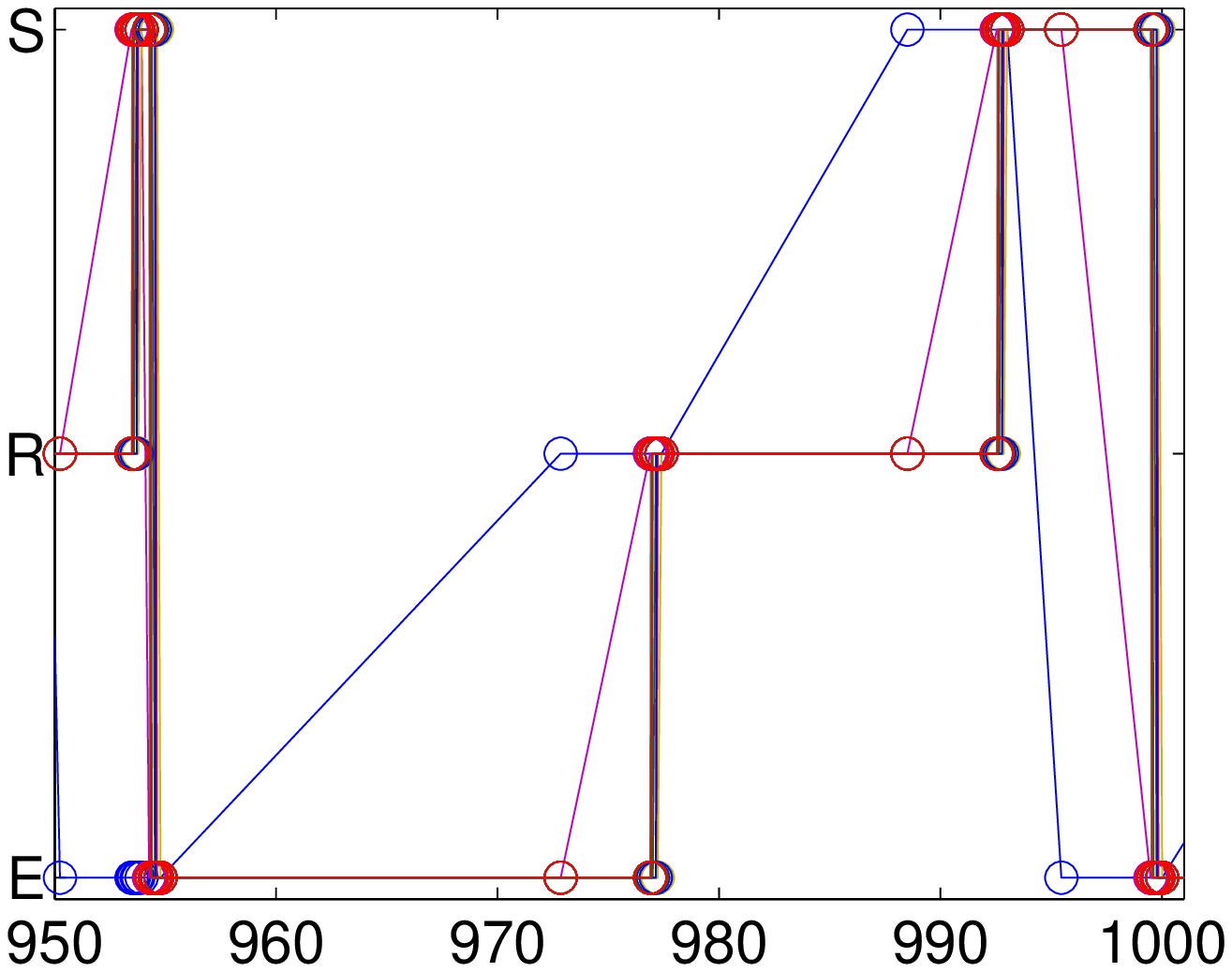}
\caption{(Color online) Typical state transitions for coupled cows in (left) a circular lattice and (right) a star graph with fixed coupling strengths $\sigma_x=\sigma_y=0.05$.  We plot (artificial) straight lines to help visualize transitions between states (which are represented by open circles, with different colors representing different cows).  The horizontal axis is time. Some of the curves overlap (so that fewer than 10 colors are visible) due to the partial synchrony between individual cows.
}
\label{fig:TenCows}
\end{figure}

In Fig.~\ref{fig:TenCowsSync}, we illustrate the dependence of synchrony on different coupling strengths for the two network configurations. Interestingly, when the coupling strength is increased, the cows tend to synchronize less when they are coupled via a circular lattice, whereas synchrony is improved if they are coupled via a star graph. We have also tested numerically other network configurations, such as circular lattices with more than just nearest-neighbor connections and (Erd\"os-Reny\'i) random graphs, and the resulting curves are qualitatively similar to the one shown in the left panel of Fig.~\ref{fig:TenCowsSync}.  It would be interesting to study what network architectures can lead to good synchrony beyond the star graph, which is an idealized example. A heuristic reason that synchrony can decrease when coupling is increased in our herd model (and, more generally, in piecewise smooth dynamical systems) is that decreasing the difference in the observable variables $(x,y)$ through coupling does not necessarily reduce the difference in the hidden state variable $\theta$, and the effect of coupling might well be the opposite from what one would naively anticipate, as we have observed using the circular lattice structure.  Moreover, recent work in other contexts has illustrated that increasing the number of connections in a network can sometimes lead to less synchrony \cite{nish10}.  Although synchronization has been studied extensively for smooth dynamical systems
~\cite{Arenas_PHYREP08, Belykh_PHYSICAD06, nish10, Pecora_PRL98, pikovsky, Senthilkumar_PRE09, Stilwell_SIADS06, sync, Sun_SIADS09, Sun_EPL09} 
and the mechanisms that promote synchrony in such situations are relatively well understood, little is known about networks of coupled piecewise smooth dynamical systems.  It would be interesting to study the influence of network architecture on synchrony for models other than smooth dynamical systems (such as the herd model considered in this paper), which might prove to be important in studying the behavior of interacting animals.


\begin{figure}[htcp]
\includegraphics[scale=0.43]{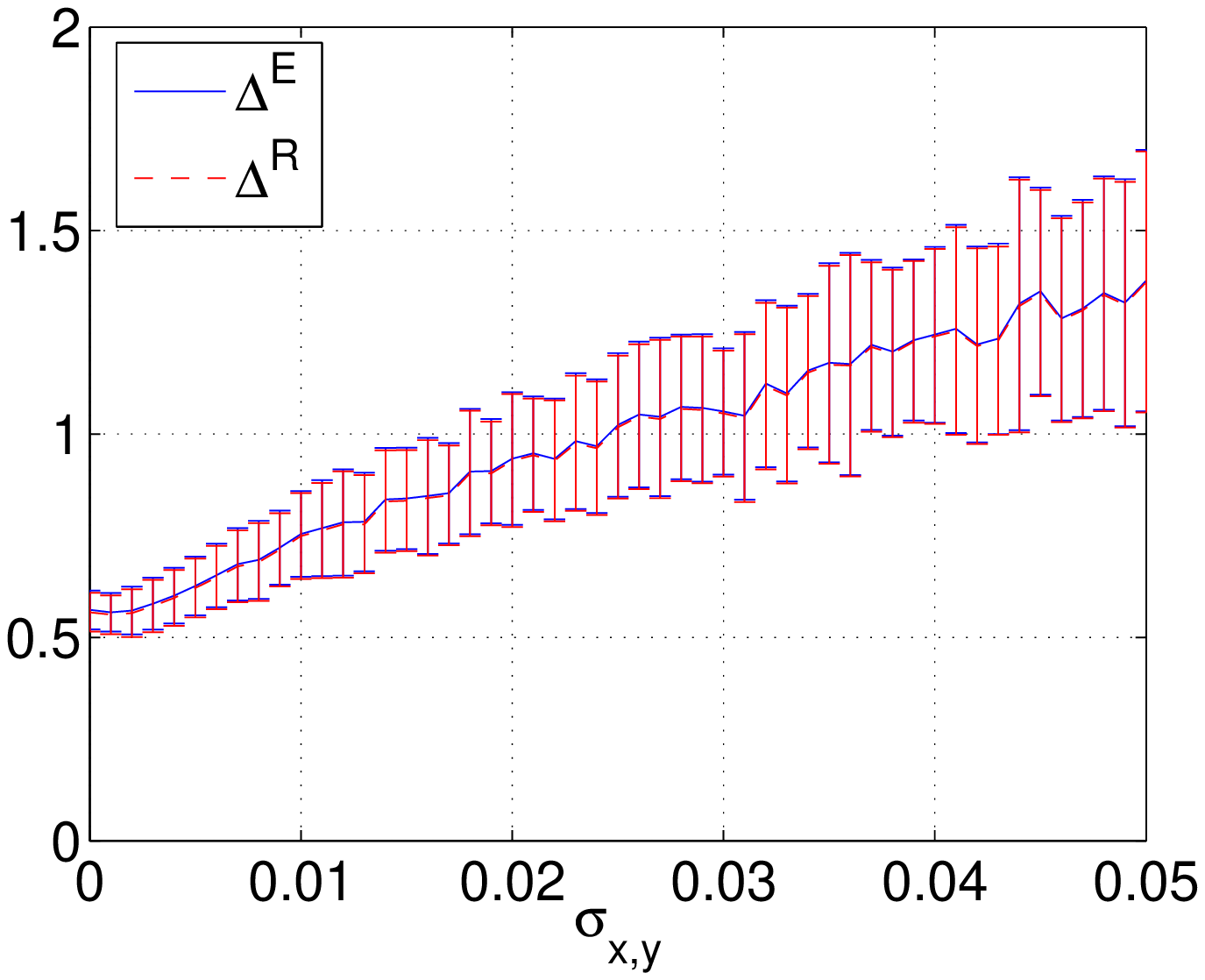}
\includegraphics[scale=0.43]{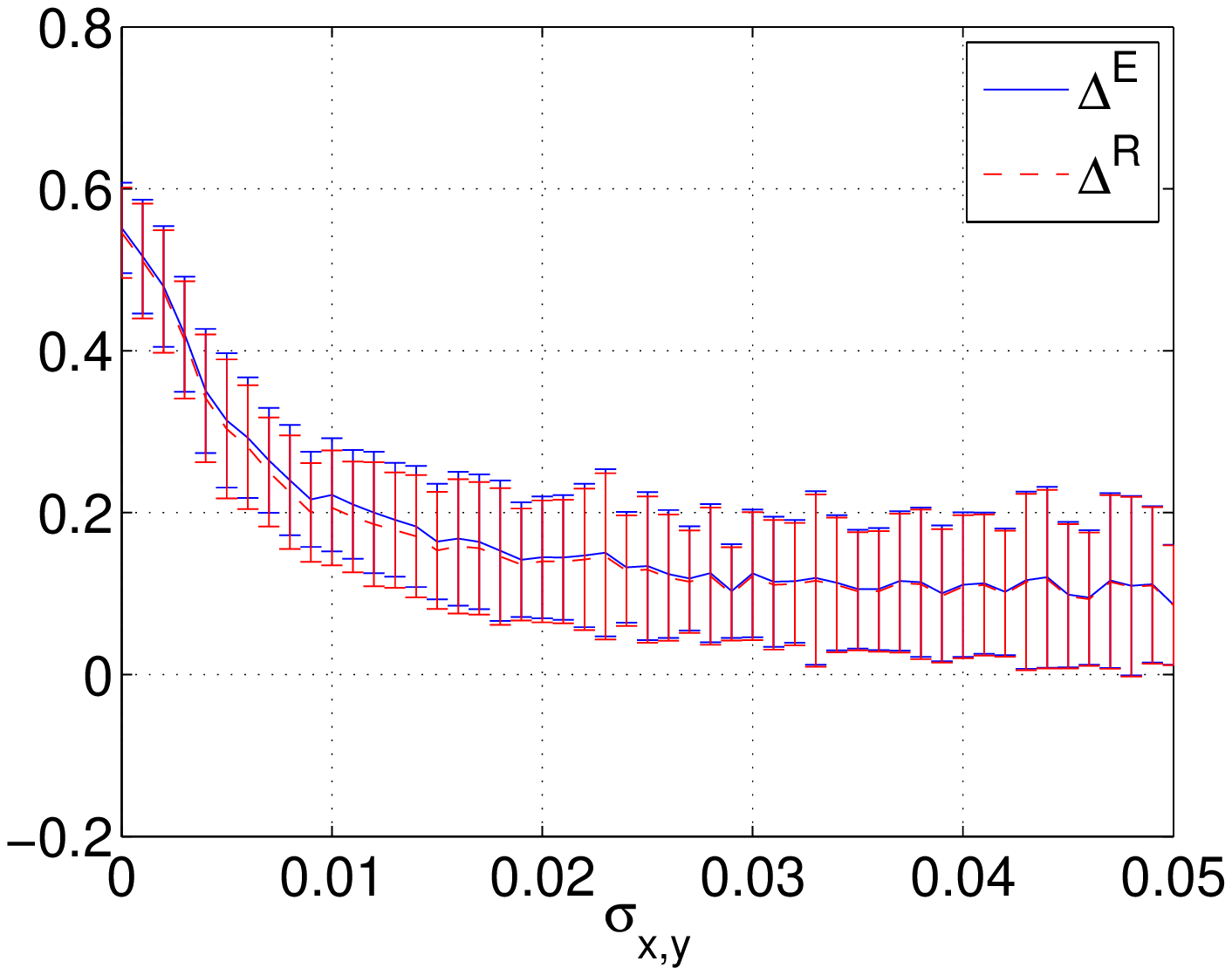}
\caption{(Color online) Synchrony measure versus coupling strengths in the (left) circular lattice and (right) star graph.
}
\label{fig:TenCowsSync}
\end{figure}


\section{Discussion} \label{five}

We have only scratched the surface concerning the modeling of herd synchrony in cattle. 

We considered each cow as an oscillator, which we modeled as a piecewise affine dynamical system.  Our single cow model had interesting mathematical properties, which we discussed in detail.  Monotonic dynamics within each state was the most important detail, and we chose affine monotonic dynamics to make the analysis as tractable as possible.  

We illustrated herd dynamics through specific coupling choices between cows.  We assumed that the herd is in a field rather than a pen and, in particular, ignored the presence of spatial constraints.  We considered cows that become hungrier when they notice others eating and a greater desire to lie down when they notice others lying down, but numerous other choices would also be interesting to study.  For example, the relative importances of the two aforementioned types of positive coupling can be varied systematically, and the specific functional forms of coupling can also, of course, be different.  Additionally, it is not clear whether cow synchrony arises from such an active mechanism or whether it can arise from more passive forms of coupling.  In particular, this could entail the incorporation of spatial effects, such as limited eating and bedding areas and the competition of cows for such resources.  The inherent oscillations of individual cows can lead to synchronization even with almost no interactions between individuals \cite{tom}.  Synchrony can potentially emerge even if the only interaction between cows occurs when one steps on another one, so such a minimalist but biologically meaningful mechanism (in which the cows need not even notice whether another cow is feeding or resting) would be interesting to test against more complicated forms of interaction.  It would also be interesting to use real observations of cattle to compare the synchronization properties in limited space versus ``unlimited" space (i.e., pens versus open fields), and such experiments are currently in progress.  

One could examine spatial effects in the oscillator model of cows by considering more realistic network architectures.  Such networks could either come from experimental data (which has not yet been gathered) of which cows come into contact with each other or using structures that respect the fact that fields and pens are planar regions.  It would also be interesting to consider different network structures from an abstract perspective in order to test observations such as the different dynamics with the star graph (which has one high-degree node and many small-degree nodes), and also to consider the synchronization dynamics of larger herds.  Additionally, herds of cattle are known to have hierarchies, as not all cows are created equal, and this can be incorporated into the model either through an appropriate network architecture or by considering heterogeneity in the dynamics of individual cows.

An alternative modeling choice would be to consider agent-based models for the herd \cite{franz} rather than the oscillator model that we have studied.  Agent-based formulations are good at incorporating spatial effects, but they of course have a black-box flavor that makes them very difficult to analyze.

The inherent oscillation between the standing/eating phase and the lying/ruminating phase has interesting biological consequences. For example, to stay together as a herd, it is not necessary for all cows to be exactly synchronized, as is sometimes believed. It is possible (and it has been observed often in fields) for a herd to have some individuals lying down and other individuals standing and grazing around them.  From a functional perspective, it is conceivable that this could lead to better spotting of predators than if everyone had their heads down at the same time. A degree of desychronization (provided that it didn't lead to the herd breaking up) might actually be better for each individual than perfect synchronization \cite{beauchamp09}.  Intriguingly the recent model of groups of animals by Dost\'{a}lkov\'{a} and \v{S}pinka, in which each individual can either move or stay in one place, has found evidence (using optimization of a cost function) of partial synchronization but that completely synchronized and completely desynchronized situations seem to occur for a much larger set of parameter values \cite{dostalkova10}.  Moreover, their ``paradoxical" prediction that average group size might decrease as the ratio of the grouping benefit to grouping cost increases is in some sense similar (at least philosophically) to our prediction that less synchronization can potentially occur even with stronger coupling between individual cows.

Although we have framed our discussion in terms of cows, our oscillator framework is very general and should also be useful---perhaps with modifications that are tailored to different species---in studying the behavior of other ruminants.  It is of considerable biological interest to establish empirically which mechanisms for synchrony actually operate in real cows (and, more generally, in other ruminants and in other animals) and to discern more precisely the extent to which such synchrony actually occurs.  It is thus important to develop testable predictions that can help one distinguish the numerous possible synchronization mechanisms.  We have taken one small step in this paper, but there is clearly a lot more interesting research on the horizon.  It is also desirable to consider practical situations, such as the effect of changing pen shape, stocking density, size of lying area, feed-trough size and position, and the nutrient quality of the food.

In addition to the many fascinating animal-behavior questions, the research reported in this paper also suggests several interesting abstract questions.  For example, although the theory of synchronization is well-developed and widely used for smooth dynamical systems~\cite{Pecora_PRL98, Belykh_PHYSICAD06, Arenas_PHYREP08, Sun_SIADS09, Sun_EPL09}, it is an open problem to predict in general when a system that is composed of coupled piecewise smooth oscillators can achieve a stable synchronous state.  In pursuing such considerations, it would also be relevant to consider different notions of synchrony.  Such analysis is of potential importance given the wealth of piecewise smooth dynamical systems that arise in many applications~\cite{Bernardo_2007}. Furthermore, the effects of delay and changes in the network architecture in time are also expected to affect the synchronization properties, though such considerations are difficult even for smooth systems~\cite{Stilwell_SIADS06, Sethia_PRL08, Senthilkumar_PRE09}. We hope that that the model that we have developed in this paper will stimulate research along these lines.



\section{Conclusions} \label{six}

We modeled the eating, lying, and standing dynamics of a cow using a piecewise affine dynamical system.  We constructed Poincar\'e maps to examine the system's equilibrium point and low-period cycles in depth and illustrated more complicated behavior using bifurcation diagrams.  We then considered a model of coupled cows---first using two cows and then using networks of interacting cows---in order to study herd synchrony.  We chose a form of coupling based on cows having an increased desire to eat if they notice another cow eating and an increased desire to lie down if they notice another cow lying down.  We constructed a measure of synchrony that keeps track of when each cow is in a given state and showed that it is possible for cows to synchronize \emph{less} when the coupling is increased.  We also discussed other forms of coupling and cow-interaction networks that can be studied using our formulation.  This line of inquiry seems very promising and that it will not only lead to interesting future theoretical investigations but can even motivate new experiments.  Although we framed our discussion in terms of cows, our framework is general and it should be fruitful in the study of the behavior of other ruminants as well.  The stakes are high when studying animal behavior, and we believe that our model of cattle herds (and generalizations of our model) will yield increased understanding of their synchronization properties.  Milking these ideas as much as possible should prove to be very insightful from both theoretical and practical perspectives.


\section*{Acknowledgments}

We thank Thomas Woolley for useful comments on a draft of this manuscript, Puck Rombach for assistance with cow puns, and Yulian Ng for drawing a spherical cow for us.  JS and EMB gratefully acknowledges partial support from the Army Research Office through grant 51950-MA and the Mathematical Institute of the University of Oxford for hospitality.


\newpage

\appendix


\section{Investigation of the Single Cow Model using Poincar\'e Section} \label{app1}

The single cow model for $w=(x,y;\theta)$, with $(x,y)\in[0,1]\times[0,1]$ and $\theta\in\{\mathcal{E},\mathcal{R},\mathcal{S}\}$, consists of equations describing dynamics for different states $\theta$ and rules for how to switch states.  The equations within each state are
\begin{eqnarray}
	&\mbox{($\cal{E}$)~\textit{Eating state: }}&
	\begin{cases}\label{A_eq:SinglecowE}
		\dot{x} = -\alpha_{2}x\,,\\
		\dot{y} = \beta_{1}y\,,
	\end{cases} \\
	&\mbox{($\cal{R}$)~\textit{Resting state: }}&
	\begin{cases}\label{A_eq:SinglecowR}
		\dot{x} = \alpha_{1}x\,,\\
		\dot{y} = -\beta_{2}y\,,
	\end{cases} \\
	&\mbox{($\cal{S}$)~\textit{Standing state: }}&
	\begin{cases}\label{A_eq:SinglecowS}
		\dot{x} = \alpha_{1}x\,\\
		\dot{y} = \beta_{1}y\,,
	\end{cases} 
\end{eqnarray}
The rules for switching the state $\theta$ are
\begin{equation}\label{A_eq:SinglecowSwitch}
	\theta\rightarrow
	\begin{cases}
		\cal{E}&\mbox{if $\theta\in\{\cal{R,S}\}$ and $x=1$\,,}\\
		\cal{R}&\mbox{if $\theta\in\{\cal{E,S}\}$ and $x<1\,, y=1$\,,}\\
		\cal{S}&\mbox{if $\theta\in\{\cal{E,R}\}$ and $x<1\,, y=\delta$ (or $x=\delta\,, y<1$)\,.}
	\end{cases}
\end{equation}
All of the parameters ($\alpha_{1,2}$ and $\beta_{1,2}$) are positive.  We use the term \textit{single cow equations} to refer collectively to Eqs.~(\ref{A_eq:SinglecowE},\ref{A_eq:SinglecowR},\ref{A_eq:SinglecowS},\ref{A_eq:SinglecowSwitch}).


\subsection{Transversality of the Poincar\'e Section}

As with smooth systems, the term ``flow" in piecewise smooth dynamical systems designates the usual time-parameterized continuous group \cite{Bernardo_2007}.
\begin{definition}[Flow]
The solution to the single cow equations, which we denote by $\phi(t-t_0,w_0)$ for initial condition $w_0$ at time $t_0$, is called a {\rm flow} of the single cow equations.
\end{definition}

The two strips of the boundary of the single cow equations form a set that we denote by $\Sigma$.  It is defined by
\begin{eqnarray}
	\Sigma&\equiv&\{(x,y;\theta)|x=1\,,\delta\leq{y}\leq{1},s=\mathcal{E}\}\cup\{(x,y;\theta)|\delta\leq{x}\leq{1}\,,y=1\,,\theta=\mathcal{R}\}\nonumber\\
	&=&\mathcal{\partial{E}}\cup\mathcal{\partial{R}}\,,
\end{eqnarray}
where we recall that $\partial\mathcal{E}$ and $\partial\mathcal{R}$ are used to represent the two sets $\{(x,y;\theta)|x=1\,,\delta\leq{y}\leq{1},s=\mathcal{E}\}$ and $\{(x,y;\theta)|\delta\leq{x}\leq{1}\,,y=1\,,\theta=\mathcal{R}\}$.

The following lemma shows that the surface $\Sigma$ can be used as a Poincar\'e section for any flow.  This result follows directly from the equations of motion. 

\begin{lemma}[Transversality and Recurrence of $\Sigma$]
For any initial condition $w_0=(x_0,y_0;\theta_0)$ with initial time $t_0$, the flow $\phi(t-t_0,w_0)$ of the single cow equations is transverse to $\Sigma$.  In other words, the direction of the flow (restricted to the $xy$-plane) is not tangent to $\Sigma$ (also restricted to the $xy$-plane). Furthermore, there exists $t>t_0$ such that $\phi(t-t_0,w_0)\in\Sigma$.
\end{lemma}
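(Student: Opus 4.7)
The plan is to split the lemma into two independent claims---(i) transversality of the flow to $\Sigma$ at every point of intersection, and (ii) forward recurrence with respect to $\Sigma$---and verify each by a short case analysis on the current state $\theta$. Both will rest on the same structural observation: in every state, each of the two coordinates evolves as an independent exponential with a strictly nonzero rate, so the velocity is nonvanishing on the relevant boundary and all hitting times are finite explicit logarithms.

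For transversality I would work entirely in the $xy$-plane. The section $\Sigma$ has two components: $\partial\mathcal{E}$ lies along the vertical line $x=1$ and carries the symbolic label $\theta=\mathcal{E}$, while $\partial\mathcal{R}$ lies along the horizontal line $y=1$ and carries $\theta=\mathcal{R}$. On $\partial\mathcal{E}$ the flow obeys Eq.~(\ref{A_eq:SinglecowE}), which gives $\dot{x}=-\alpha_{2}<0$ at $x=1$; since the tangent direction to $\partial\mathcal{E}$ is $\hat{y}$ and the velocity has a nonzero $\hat{x}$-component, the flow is not tangent to $\partial\mathcal{E}$. The argument on $\partial\mathcal{R}$ is the mirror image, using $\dot{y}=-\beta_{2}<0$ at $y=1$ from Eq.~(\ref{A_eq:SinglecowR}).

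For recurrence I would integrate each state's equation explicitly. In state $\mathcal{E}$, $x(t)=x_{0}e^{-\alpha_{2}t}$ decreases strictly and $y(t)=y_{0}e^{\beta_{1}t}$ increases strictly, so the flow must exit $[\delta,1]^{2}$ either through $y=1$ (forcing the transition $\mathcal{E}\to\mathcal{R}$ via Eq.~(\ref{A_eq:SinglecowSwitch}), which places the trajectory on $\partial\mathcal{R}\subset\Sigma$) or through $x=\delta$ (forcing $\mathcal{E}\to\mathcal{S}$); the hitting time
\[
\min\!\left\{\tfrac{1}{\beta_{1}}\ln(1/y_{0}),\ \tfrac{1}{\alpha_{2}}\ln(x_{0}/\delta)\right\}
\]
is finite since $x_{0},y_{0}\in[\delta,1]$ with $\delta>0$. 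The analogous analysis in state $\mathcal{R}$ either lands directly on $\partial\mathcal{E}\subset\Sigma$ or transitions to $\mathcal{S}$ via $y=\delta$. In state $\mathcal{S}$ both coordinates grow strictly with positive rates, so whichever of $x=1$ or $y=1$ is reached first (again in finite logarithmic time) produces a switch to $\mathcal{E}$ or $\mathcal{R}$ and places the trajectory on $\Sigma$. Thus at most two state transitions separate any initial condition from a point of $\Sigma$.

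The main obstacle should be nothing more than bookkeeping: three initial states, two possible exits from each, and the tie-breaking conventions at the corners $(1,1)$, $(1,\delta)$, $(\delta,1)$ introduced at the start of Section~\ref{three}. The strict inequality ``$t>t_{0}$'' in the recurrence conclusion---even when $w_{0}$ already lies in $\Sigma$---follows from the transversality step, which forces the flow to leave $\Sigma$ immediately and then return by the finite-transition argument above.
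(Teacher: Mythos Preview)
Your proposal is correct and is precisely the argument the paper has in mind: the paper's own ``proof'' consists of the single sentence ``This result follows directly from the equations of motion,'' and your case analysis on $\theta$---checking the sign of the normal component of the velocity on each piece of $\Sigma$ for transversality, and bounding the exit times by explicit logarithms for recurrence---is exactly how one unpacks that sentence. Your observation that at most two state transitions are needed to reach $\Sigma$ is a nice sharpening that the paper leaves implicit.
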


A similar lemma holds for the extended Poincar\'e section $\Sigma'$, which is defined as
\begin{eqnarray}
	\Sigma'&\equiv&\Sigma\cup\{(x,y;\theta)|x=\delta,\delta\leq{y}<1\}\cup\{(x,y;\theta)|\delta\leq{x}<1\,,y=\delta\}\nonumber\\
	&=&\mathcal{\partial{E}}\cup\mathcal{\partial{R}}\cup\mathcal{\partial{S}}_y\cup\mathcal{\partial{S}}_x\,,
\end{eqnarray}
where $\partial\mathcal{S}_x$ and $\mathcal{S}_y$ are used to represent the sets
$\{(x,y;\theta)|x=\delta,\delta\leq{y}<1\}$ and $\{(x,y;\theta)|\delta\leq{x}<1\,,y=\delta\}$, respectively. 


\subsection{Discrete Dynamics on the Poincar\'e Section: Derivation}

The derivation of map $g$ on $\Sigma'$ involves first solving for the flows on the continuous segments where $\theta$ takes one value. 

Starting from $\theta=\mathcal{E}$, we get
\begin{equation}
	(x,y;\cal{E})\rightarrow
	\begin{cases}
	t_{\mathcal{ER}}=\frac{1}{\beta_1}\log(\frac{1}{y})\,, & 
	g_{\mathcal{ER}}(x,y;\mathcal{E}) = (y^{\frac{\alpha_2}{\beta_1}},1;\mathcal{R})\,,\vspace{0.2cm}\\
	t_{\cal{ES}}=\frac{1}{\alpha_2}\log(\frac{1}{\delta})\,, & 
	g_{\mathcal{ES}}(x,y;\mathcal{E}) = (\delta,(\frac{1}{\delta})^{\frac{\beta_1}{\alpha_2}}y;\mathcal{S})\,.
	\end{cases}
\end{equation}
Starting from $\theta=\mathcal{R}$, we get
\begin{equation}
	(x,y;\mathcal{R})\rightarrow
	\begin{cases}
	t_{\mathcal{RE}}=\frac{1}{\alpha_1}\log(\frac{1}{x})\,, &
	 g_{\mathcal{RE}}(x,y;\mathcal{R}) = (1,{x}^{\frac{\beta_2}{\alpha_1}};\mathcal{E})\,,\vspace{0.2cm}\\
	t_{\mathcal{RS}}=\frac{1}{\beta_2}\log(\frac{1}{\delta})\,, & 
	g_{\mathcal{RS}}(x,y;\mathcal{R}) = ((\frac{1}{\delta})^{\frac{\alpha_1}{\beta_2}}x,\delta;\mathcal{S})\,.
	\end{cases}
\end{equation}
Starting from $\theta=\mathcal{S}$, we get
\begin{equation}
	(x,y;\mathcal{S})\rightarrow
	\begin{cases}
	t_{\mathcal{SE}}=\frac{1}{\alpha_1}\log(\frac{1}{x})\,, &
	 g_{\mathcal{SE}}(x,y;\mathcal{S}) = (1,(\frac{1}{x})^{\frac{\beta_1}{\alpha_1}}y;\mathcal{E})\,,\vspace{0.2cm}\\
	t_{\mathcal{SR}}=\frac{1}{\beta_1}\log(\frac{1}{y})\,, & 
	g_{\mathcal{SR}}(x,y;\mathcal{S}) = ((\frac{1}{y})^{\frac{\alpha_1}{\beta_1}}x,1;\mathcal{R})\,.
	\end{cases}
\end{equation}

Subscripts such as $\mathcal{ER}$ indicate the transition of $\theta$ from one state (e.g., $\mathcal{E}$) to another (e.g., $\mathcal{R}$). The quantity $t$ with the appropriate subscript represents the time it takes for this transition to happen. In the next subsection of this appendix, we will analyze the dependence of the discrete system specified by the above rules on the parameter values and initial conditions.

Using the above equations, we derive the discrete dynamics $g$ on $\Sigma'$ from the $n$th transition to the $(n+1)th$ transition.  This is given by $(x_{n+1},y_{n+1},\theta_{n+1}) = g(x_n,y_n,\theta_n)$, where
\begin{eqnarray}
	g(x=1,\delta\leq{y}\leq{1};\mathcal{E}) = 
	\begin{cases}
	 (y^{\frac{\alpha_2}{\beta_1}},1;\mathcal{R})\,, & 
		\mbox{~if~}y\geq\delta^{\frac{\beta_1}{\alpha_2}}\,,\\
	(\delta,\delta^{-\frac{\beta_1}{\alpha_2}}y;\mathcal{S})\,, &
		\mbox{~if~}y<\delta^{\frac{\beta_1}{\alpha_2}}\,,\\
	\end{cases}\nonumber\\
	g(\delta\leq{x}<1,y=1;\mathcal{R}) = 
	\begin{cases}
	(1,{x}^{\frac{\beta_2}{\alpha_1}};\mathcal{E})\,, &
		\mbox{~if~}x\geq\delta^{\frac{\alpha_1}{\beta_2}}\,,\\
	(\delta^{-\frac{\alpha_1}{\beta_2}}x,\delta;\mathcal{S})\,, &
		\mbox{~if~}x<\delta^{\frac{\alpha_1}{\beta_2}}\,,\\	
	\end{cases}\nonumber\\
	g(x=\delta,\delta\leq{y}<1;\mathcal{S}) = 
	\begin{cases}
		(1,\delta^{-\frac{\beta_1}{\alpha_1}}y;\mathcal{E})\,, &
			\mbox{~if~}y\leq{\delta}^{\frac{\beta_1}{\alpha_1}}\,,\\
		(y^{-\frac{\alpha_1}{\beta_1}}\delta,1;\mathcal{R})\,, &
			\mbox{~if~}y>{\delta}^{\frac{\beta_1}{\alpha_1}}\,,\\
	\end{cases}\nonumber\\
	g(\delta<x<1,y=\delta;\mathcal{S}) = 
	\begin{cases}
		(1,x^{-\frac{\beta_1}{\alpha_1}}\delta;\mathcal{E})\,, &
			\mbox{~if~}x\geq{\delta}^{\frac{\alpha_1}{\beta_1}}\,,\\
		(\delta^{-\frac{\alpha_1}{\beta_1}}x,1;\mathcal{R})\,, &
			\mbox{~if~}x<{\delta}^{\frac{\alpha_1}{\beta_1}}\,.\\
	\end{cases}
\end{eqnarray}
This, in turn, yields the discrete dynamics on $\Sigma'$. The dynamics $f$ on $\Sigma$ is then simply $g$ restricted to $\Sigma$.

We need the following definitions in order to discuss of stability of orbits on the dynamics on $\Sigma$. We start by defining an appropriate distance measure on $\Sigma$.

\begin{definition}[Distance measure on $\Sigma$]
We define the {\rm distance} $\| \cdot \|$ on $\Sigma$ by
\begin{equation}
	\|w_1-w_2\| \equiv |x_1-x_2|+|y_1-y_2|\,,
\end{equation}
where $w_{i}=(x_i,y_i;\theta_i)$ for $i=1,2$.  Note that the symbolic variable $\theta$ does not affect the distance.  
\end{definition}

We now give a definition of stability and asymptotic stability that is analogous to the standard definition for smooth dynamical systems~\cite{Bernardo_2007,Perko_1996}.

\begin{definition}[Stability and Asymptotic Stability]
Let $f$ denote the discrete dynamics on $\Sigma$. A fixed point $w_0$ on $\Sigma$ is {\rm stable} if for any $\epsilon>0$, there exists an $\eta>0$ such that
\begin{equation}
	\|w-w_0\|<\eta\Rightarrow \|f(w)-f(w_0) \|<\epsilon\,.
\end{equation}
A fixed point $w_0$ is {\rm asymptotically stable} if it is stable and there exists an $\eta>0$ such that
\begin{equation}
	\|w-w_0\|<\eta\Rightarrow\lim_{n\rightarrow\infty}f^{n}(w)=w_0\,.
\end{equation}
The stability and asymptotic stability of a period-$T$ orbit $\{z_0,\dots,z_{T-1}\}$ is defined as the stability and asymptotic stability of the fixed point $z_0$ of the $T$-th iterate $f^T$ of the map $f$ on $\Sigma$.
\end{definition}


\subsection{Fixed Points: Existence and Stability}

We first show that there is only one fixed point on $\Sigma$.

\begin{lemma}[Fixed Points on $\Sigma$]
	The only fixed point of $f$ on $\Sigma$ is the point $w_0=(x_0,y_0;\theta)=(1,1;\mathcal{E})$. This fixed point is (locally) asymptotically stable if and only if
	\begin{equation}\label{A_eq:FixedPtCriteria}
		\frac{\alpha_2}{\alpha_1}\cdot\frac{\beta_2}{\beta_1}<1\,.
	\end{equation}
\end{lemma}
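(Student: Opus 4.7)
My plan is to separate uniqueness of the fixed point from its stability analysis, and to use local linearization for the latter.

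For uniqueness, the key observation is that $\Sigma=\partial\mathcal{E}\cup\partial\mathcal{R}$, and these two segments share only the corner point $(1,1)$. Any fixed point of $f$ must satisfy $f(w^*)=w^*$, preserving both $(x^*,y^*)$ and $\theta^*$. Take $w^*\in\partial\mathcal{E}$ (so $x^*=1$, $\theta^*=\mathcal{E}$). One application of $g$ lands either on $\partial\mathcal{R}$ via case $(a)$, or on $\partial\mathcal{S}_y$ via case $(b)$, in which case a second application of $g$ (case $(e)$ or $(f)$) is required to return to $\Sigma$. Composing $(b)+(e)$ produces a return to $\partial\mathcal{E}$ with new $y$-coordinate $\delta^{-\beta_1/\alpha_1-\beta_1/\alpha_2}\,y$, whose scaling factor is never $1$ for positive parameters; no fixed point can occur there. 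In the remaining branches, the return is to $\partial\mathcal{R}$, so $w^*$ must lie in $\partial\mathcal{E}\cap\partial\mathcal{R}=\{(1,1)\}$. A symmetric analysis for $w^*\in\partial\mathcal{R}$ yields the same corner. Together with the tie-breaking convention that assigns $\theta=\mathcal{E}$ to $(1,1)$, the unique fixed point is $(1,1;\mathcal{E})$.

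For stability, I would linearize the appropriate iterate of $f$ at the corner. For $y$ sufficiently close to $1$ from below, case $(a)$ of $g$ applies and produces $(y^{\alpha_2/\beta_1},1;\mathcal{R})$. Since $y^{\alpha_2/\beta_1}$ also lies close to $1$, case $(c)$ then applies and produces $(1,y^{\alpha_2\beta_2/(\alpha_1\beta_1)};\mathcal{E})$. Hence, locally, $f^2$ restricted to $\partial\mathcal{E}$ is the one-dimensional map
\begin{equation*}
y\;\mapsto\;y^{q},\qquad q\equiv\frac{\alpha_2}{\alpha_1}\cdot\frac{\beta_2}{\beta_1}.
\end{equation*}
Its multiplier at $y=1$ is $q$, so by the standard one-dimensional linearization criterion, $y=1$ is asymptotically stable under $f^2$ iff $q<1$, and asymptotic stability of the fixed point under $f$ itself follows. (The borderline case $q=1$ renders $f^2$ locally equal to the identity, which is consistent with the infinite family of stable period-two orbits identified later in Case A.)

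The main obstacle is a careful treatment of the corner $(1,1)$, where the symbolic variable is genuinely ambiguous and the flow meets two discontinuity boundaries simultaneously. One must explicitly argue that, under the paper's tie-breaking convention, $(1,1;\mathcal{E})$ is a \emph{fixed} point of $f$ rather than merely a period-two orbit alternating between $(1,1;\mathcal{E})$ and $(1,1;\mathcal{R})$, so that the second iterate $f^2$ and its multiplier $q$ genuinely govern the local dynamics. Once this convention is pinned down, both uniqueness and the asymptotic-stability criterion follow from elementary case analysis and one-dimensional linearization.
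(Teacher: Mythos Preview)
Your proposal is correct and hews closely to the paper's approach, though the two differ in presentation. For uniqueness, the paper argues directly from the monotonicity of the flow: starting from $(1,y_0;\mathcal{E})$ with $y_0<1$, a putative fixed point would have to return to $\partial\mathcal{E}$ without touching $\partial\mathcal{R}$, i.e.\ via the sequence $\mathcal{E}\to\mathcal{S}\to\mathcal{E}$; but since $\dot y>0$ in both the $\mathcal{E}$ and $\mathcal{S}$ phases, the returning $y$-coordinate strictly exceeds $y_0$, a contradiction. Your case-analysis on the explicit formulas for $g$ reaches the same conclusion---your factor $\delta^{-\beta_1/\alpha_1-\beta_1/\alpha_2}>1$ is precisely the integrated form of this monotonicity---so the arguments are equivalent, with the paper's version being slightly more conceptual and yours more computational. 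For stability, the paper says only that it ``is easily obtained through linearization''; your explicit identification of the local one-dimensional return map $y\mapsto y^{q}$ with multiplier $q=\alpha_2\beta_2/(\alpha_1\beta_1)$ at $y=1$ is exactly the linearization the paper has in mind, and in fact supplies the detail the paper omits.

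Your flagged ``main obstacle'' about the corner is legitimate but is handled more simply than you suggest: the paper's distance on $\Sigma$ is defined to ignore the symbolic coordinate $\theta$, so $(1,1;\mathcal{E})$ and $(1,1;\mathcal{R})$ are at distance zero and the corner is a fixed point in the relevant sense without further tie-breaking gymnastics. With that convention in hand, your $f^2$ computation governs the local stability just as you describe.
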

\begin{proof}
First we show that if $(x_0,y_0;\theta_0)=(1,y_0;\mathcal{E})$, where $\delta\leq{y_0}<1$ or $(x_0,y_0;\theta_0)=(x_0,1;\mathcal{R})$, then it is not a fixed point. Suppose that there is a fixed point starting from $w_0=(x_0=1,\delta\leq{y_0}<1;\theta_0=\mathcal{E})$. Because it is a fixed point on $\Sigma$, the flow cannot hit $\partial\mathcal{R}$.  It must thus intersect $\partial\mathcal{S}_y$ first and then continue and hit $\partial\mathcal{E}$ again; see Fig.~\ref{appendixfig:FixedPoint} for an illustration.  However, because the $y$-component increases exponentially with rate $\beta_1>0$ when both $\theta=\mathcal{E}$ and $\theta=\mathcal{S}$---see Eqs.~(\ref{A_eq:SinglecowE},\ref{A_eq:SinglecowR},\ref{A_eq:SinglecowS})---it follows that $y_1>y_0$. Consequently, $(x_1,y_1)$ cannot be the same point as $(x_0,y_0)$.  A similar argument applies to initial conditions with $\theta_0=\mathcal{R}$, so we can conclude that there is no fixed point for the discrete dynamics on $\Sigma-\{(1,1,\mathcal{E})\}$.

\begin{figure}[htcp]
\centering
\includegraphics[scale=0.5]{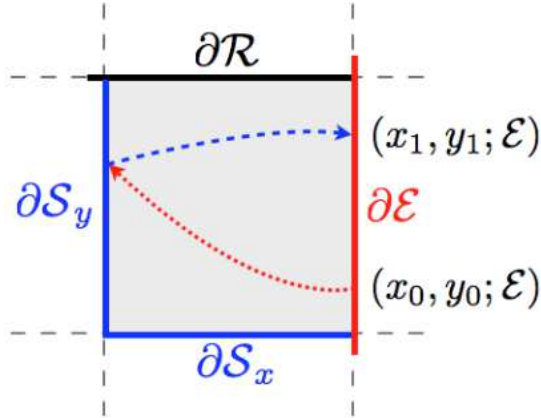}
\caption{(Color online) Illustration that there cannot be any fixed point of $f$ on $\Sigma$ except for the corner point $(1,1;\mathcal{E})$.  This follows from the monotonic increase of the $y$-component when $\theta=\mathcal{E}$ remains unchanged. The other possible situation (not pictured) occurs when $\theta=\mathcal{R}$, for which the $x$-component increases monotonically, indicating that there cannot be an equilibrium point on $\partial\mathcal{R}$.  
}
\label{appendixfig:FixedPoint}
\end{figure}

The only possible fixed point on $\Sigma$ is the point $(1,1;\mathcal{E})$. The asymptotic stability of this fixed point is easily obtained through linearization.
\end{proof}

We remark that although linearization gives local asymptotic stability of the fixed point, numerical simulation indicates that the actual basin of attraction is the entire domain when Eq.~(\ref{A_eq:FixedPtCriteria}) is satisfied.


\subsection{Period-Two Orbits: Existence and Stability}

We next analyze all possible period-two orbits of $f$ on $\Sigma$. Some of those orbits correspond to higher-period orbits of $g$ on $\Sigma'$. When this is the case, we list the points of such an orbit on $\Sigma'$ to differentiate between different periodic orbits.  Nevertheless, it is useful to keep in mind that when restricted to $\Sigma$ (i.e., when one ignores points such that $\theta=\mathcal{S}$), such orbits have period two. Figure~\ref{appendixfig:PeriodTwo} illustrates all of the possible period-two orbits.

\begin{figure}[htcp]
\centering
\includegraphics[scale=0.5]{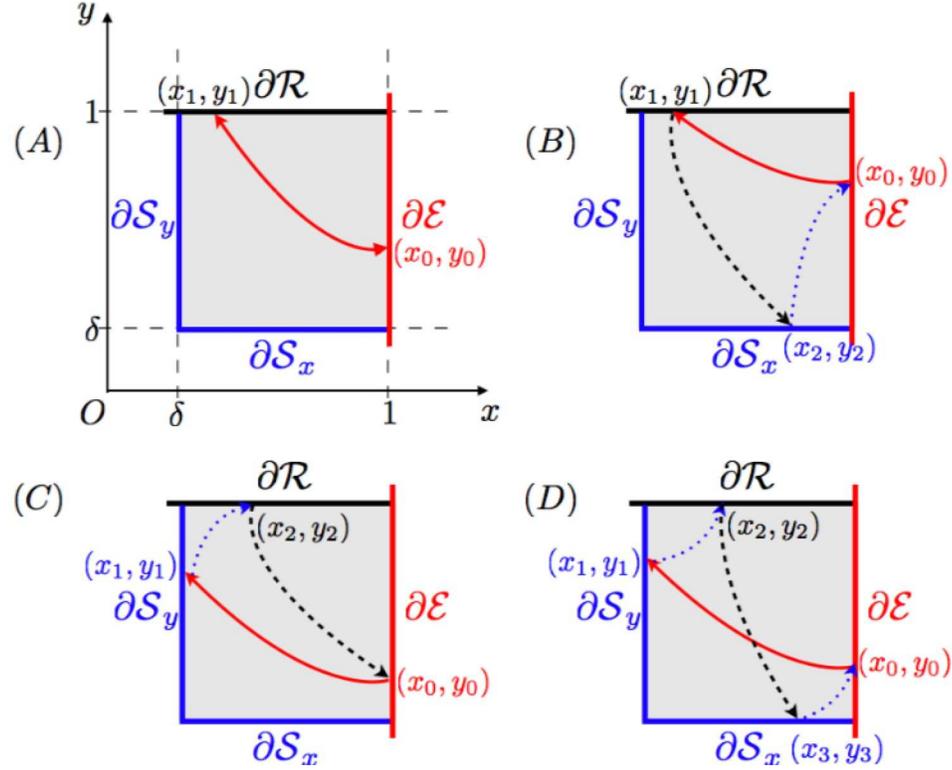}
\caption{(Color online) Illustration of all of the possible period-two orbits on $\Sigma$. 
}
\label{appendixfig:PeriodTwo}
\end{figure}

\subsubsection{Case A:~$(x_0,y_0;\mathcal{E})\rightarrow(x_1,y_1;\mathcal{R})\rightarrow(x_0,y_0;\mathcal{E})\rightarrow\dots$}

This period-two orbit satisfies
\begin{eqnarray}\label{casea}
	w_0 &=& (x_0,y_0;\mathcal{E}) = (1,y_0;\mathcal{E}), \mbox{~where~}0<y_0<1;\nonumber\\
	f(w_0) = g(w_0) = w_1 &=& (x_1,y_1;\mathcal{R}) = (y_0^{\frac{\alpha_2}{\beta_1}},1;\mathcal{R})\,, 
		\mbox{~where~}y_0\geq\delta^{\frac{\beta_1}{\alpha_2}}\,;\nonumber\\
	w_0 = f^2(w_0) = g^2(w_0) = w_2 &=& (x_2,y_2;\mathcal{E}) = (1,x_1^{\frac{\beta_2}{\alpha_1}}\,,\mathcal{E})\,,
		\mbox{~where~}x_1\geq\delta^{\frac{\alpha_1}{\beta_2}}\,.
\end{eqnarray}


The existence of this orbit entails that $(x_0,y_0)=(x_2,y_2)$ and that the constraints (i.e., the inequalities that accompany the equations) are satisfied in (\ref{casea}).  It is thus required that the parameters satisfy
\begin{equation}
	\frac{\alpha_2}{\alpha_1}\cdot\frac{\beta_2}{\beta_1} = 1\,,
\end{equation}
and that $y_0$ satisfy
\begin{equation}
	\begin{cases}
	\delta<y_0<1, & \mbox{~if~}\alpha_2\leq\beta_1\,,\\
	\delta^{\frac{\beta_1}{\alpha_2}}<y_0<1, & \mbox{~if~}\alpha_2>\beta_1\,.
	\end{cases}
\end{equation}
Linearization shows linear stability of the orbit but not necessarily asympotic stability.  

As we are taking $x_0 = 1$ (the initial point is on the right edge of the square domain), we obtain conditions for $y_0$.  All orbits must hit the right edge at some point, so we do not lose any generality by taking $x_0 = 1$.



\subsubsection{Case B:~$(x_0,y_0;\mathcal{E})\rightarrow(x_1,y_1;\mathcal{R})\rightarrow(x_2,y_2;\mathcal{S})\rightarrow(x_0,y_0;\mathcal{E})\rightarrow\dots$}


In this case, the orbit has period two on $\Sigma$ but period three on $\Sigma'$.  Specifically,
\begin{eqnarray}
	w_0 &=& (x_0,y_0;\theta_0) = (1,y_0;\mathcal{E})\,, \mbox{~where~}\delta\leq{y_0}<1;\nonumber\\
	f(w_0) = g(w_0) = w_1 &=& (x_1,y_1;\theta_1) = (y_0^{\frac{\alpha_2}{\beta_1}},1;\mathcal{R})\,, 
		\mbox{~where~}y_0\geq\delta^{\frac{\beta_1}{\alpha_2}}\,;\nonumber\\
	g^2(w_0) = w_2 &=& (x_2,y_2;\theta_2) = (\delta^{-\frac{\alpha_1}{\beta_2}}x_1,\delta;\mathcal{S})\,,
		\mbox{~where~}x_1<\delta^{\frac{\alpha_1}{\beta_2}}\,;\nonumber\\
	w_0 = f^2(w_0) = g^3(w_0) = w_3 &=& (x_3,y_3;\theta_3) = (1,x_2^{-\frac{\beta_1}{\alpha_1}}\delta;\mathcal{E})\,,
		\mbox{~where~}x_2\geq\delta^{\frac{\alpha_1}{\beta_1}}\,.
\end{eqnarray}
For fixed parameter values, there is only one such orbit; it must satisfy
\begin{equation}
	y_0 = \delta^{\frac{1+\frac{\beta_1}{\beta_2}}{1+\frac{\alpha_2}{\alpha_1}}}\,.
\end{equation}

The existence of this period-two orbit also requires that the parameters satisfy
\begin{align}
	\frac{\alpha_2}{\alpha_1}\cdot\frac{\beta_2}{\beta_1}&>1 \notag \\
		\frac{1}{\alpha_1}+\frac{1}{\alpha_2}&\geq\frac{1}{\beta_1}+\frac{1}{\beta_2}
		\mbox{~if~}\beta_1<\alpha_2\,.
\end{align}
This orbit is asymptotically stable if and only if
\begin{equation}
	\frac{\alpha_2}{\alpha_1}<1\,.
\end{equation}
In particular, it is worth remarking that the orbit is not asymptotically stable in the case $\alpha_1=\alpha_2$ describing equal growth and decay rates for hunger.


\subsubsection{Case C:~$(x_0,y_0;\mathcal{E})\rightarrow(x_1,y_1;\mathcal{S})\rightarrow(x_2,y_2;\mathcal{R})\rightarrow(x_0,y_0;\mathcal{E})\rightarrow\dots$}

In this case, the orbit has period two on $\Sigma$ but period four on $\Sigma'$.  Specifically,
\begin{eqnarray}\label{casec}
	w_0 &=& (x_0,y_0;s_0) = (1,y_0;\mathcal{E})\,,
		\mbox{~where~}\delta\leq{y_0}<1;\nonumber\\
	g(w_0) = w_1 &=& (x_1,y_1;s_1) = (\delta,\delta^{-\frac{\beta_1}{\alpha_2}}y_0;\mathcal{S})\,,
		\mbox{~where~}y_0<\delta^{\frac{\beta_1}{\alpha_2}}\,;\nonumber\\
	f(w_0) = g^2(w_0) = w_2 &=& (x_2,y_2;s_2) = (y_1^{-\frac{\alpha_1}{\beta_1}}\delta,1;\mathcal{R})\,,
		\mbox{~where~}y_1>\delta^{\frac{\beta_1}{\alpha_1}}\,;\nonumber\\
	w_0 = f^2(w_0) = g^3(w_0) = w_3 &=& (x_3,y_3;s_3) = (1,x_2^{\frac{\beta_2}{\alpha_1}};\mathcal{E})\,,
		\mbox{~where~}x_2\geq\delta^{\frac{\alpha_1}{\beta_2}}\,.
\end{eqnarray}

Solving (\ref{casec}) with the associated constraints yields necessary conditions for the existence of this period-two orbit.  The initial value $y_0$ must satisfy
\begin{equation}
	y_0 = \delta^{\frac{\frac{1}{\alpha_1}+\frac{1}{\alpha_2}}{\frac{1}{\beta_1}+\frac{1}{\beta_2}}}\,,
\end{equation}
and the parameters must satisfy
\begin{equation}
	\begin{cases}
	\frac{\alpha_2}{\alpha_1}\cdot\frac{\beta_2}{\beta_1}>1\,,\\
	\beta_1<\alpha_2\,,\\
	\frac{1}{\alpha_1}+\frac{1}{\alpha_2}<\frac{1}{\beta_1}+\frac{1}{\beta_2}\,,\\
	\frac{\alpha_2}{\alpha_1}\leq\frac{\beta_2}{\beta_1}\,.
	\end{cases}
\end{equation}
This orbit is asymptotically stable if and only if
\begin{equation}
	\frac{\beta_2}{\beta_1}<1\,.
\end{equation}
Note, in particular, that this implies that the orbit is not asymptotically stable when $\beta_1 = \beta_2$ (i.e., when the growth and decay rates for desire to lie down are equal).


\subsubsection{Case D:~$(x_0,y_0;\mathcal{E})\rightarrow(x_1,y_1;\mathcal{S})\rightarrow(x_2,y_2;\mathcal{R})\rightarrow(x_3,y_3;\mathcal{S})\rightarrow(x_0,y_0;\mathcal{E})\rightarrow\dots$}

In this case, the orbit has period two on $\Sigma$ but period four on $\Sigma'$.  Specifically,
\begin{eqnarray}
	w_0 &=& (x_0,y_0;s_0) = (1,y_0;\mathcal{E})\,,
		\mbox{~where~}\delta\leq{y_0}<1;\nonumber\\
	g(w_0) = w_1 &=& (x_1,y_1;s_1) = (\delta,\delta^{-\frac{\beta_1}{\alpha_2}}y_0;\mathcal{S})\,,
		\mbox{~where~}y_0<\delta^{\frac{\beta_1}{\alpha_2}}\,;\nonumber\\
	f(w_0) = g^2(w_0) = w_2 &=& (x_2,y_2;s_2) = (y_1^{-\frac{\alpha_1}{\beta_1}}\delta,1;\mathcal{R})\,,
		\mbox{~where~}y_1>\delta^{\frac{\beta_1}{\alpha_1}}\,;\nonumber\\
	g^3(w_0) = w_3 &=& (x_3,y_3;s_3) = (\delta^{-\frac{\alpha_1}{\beta_2}}x_2,\delta;\mathcal{S})\,,
		\mbox{~where~}x_2<\delta^{\frac{\alpha_1}{\beta_2}}\,;\nonumber\\
	w_0 = f^2(w_0) = g^4(w_0) = w_4 &=& (x_4,y_4;s_4) = (1,x_3^{-\frac{\beta_1}{\alpha_1}}\delta;\mathcal{E})\,,
		\mbox{~where~}x_3\geq\delta^{\frac{\alpha_1}{\beta_1}}\,.\nonumber\\
\end{eqnarray}

The existence of such orbits entails that
\begin{eqnarray}
	\frac{\alpha_2}{\alpha_1}\cdot\frac{\beta_2}{\beta_1}&>&1\,,\nonumber\\
	\frac{1}{\alpha_1}+\frac{1}{\alpha_2}&=&\frac{1}{\beta_1}+\frac{1}{\beta_2}\,, \notag \\
	\beta_1&<&\alpha_2\,.
\end{eqnarray}
This yields infinitely many such orbits, for which $x_0=1$ and 
\begin{equation}
	\delta<y_0<\delta^{\frac{\beta_1}{\alpha_2}}\,.
\end{equation}
All of these orbits are stable but not asymptotically stable.





\end{document}